\documentclass[a4paper,UKenglish]{lipics}
 
\usepackage{microtype}


\bibliographystyle{plain}

\title{Parameterized Approximation Algorithms for Packing Problems}
\titlerunning{Parameterized Approximation Algorithms for Packing Problems} 

\author[1]{Meirav Zehavi}
\affil[1]{Department of Computer Science, Technion IIT,
 Haifa 32000, Israel\\
 \texttt{meizeh@cs.technion.ac.il}}
\authorrunning{M. Zehavi} 

\Copyright{Meirav Zehavi}

\subjclass{G.2.1 "Combinatorial Algorithms"; G.2.2 "Graph Algorithms"; I.1.2 "Analysis of Algorithms"}
\keywords{Parameterized Algorithms; Approximation Algorithms; Packing Problems; Universal Sets}

\serieslogo{}
\volumeinfo
  {}
  {1}
  {}
  {1}
  {1}
  {1}
\EventShortName{}

\newcommand{\myparagraph}[1]{\par\smallskip\par\noindent{\bf{}#1:~}}

\usepackage{algorithm}
\usepackage{algorithmic}
\newcommand{\alg}[1]{\mbox{\sf #1}}  


\usepackage{amsopn}
\DeclareMathOperator{\family}{fam}

\begin{document}

\maketitle

\begin{abstract}
In the past decade, many parameterized algorithms were developed for packing problems.
Our goal is to obtain tradeoffs that improve the running times of these algorithms at the cost of computing approximate solutions.
Consider a packing problem for which there is no known algorithm with approximation ratio $\alpha$, and a parameter $k$. If the value of an optimal solution is at least $k$, we seek a solution of value at least $\alpha k$; otherwise, we seek an arbitrary solution. 
Clearly, if the best known parameterized algorithm that finds a solution of value $t$ runs in time $O^*(f(t))$ for some function $f$, we are interested in running times better than $O^*(f(\alpha k))$. We present tradeoffs between running times and approximation ratios for the {\sc $P_2$-Packing}, {\sc $3$-Set $k$-Packing} and {\sc $3$-Dimensional $k$-Matching} problems. Our tradeoffs are based on combinations of several known results, as well as a computation of ``approximate lopsided universal sets''.
\end{abstract}

\section{Introduction}

A problem is {\em fixed-parameter tractable (FPT)} with respect to a parameter $k$ if it can be solved in time $O^*(f(k))$ for some function $f$, where $O^*$ hides factors polynomial in the input size. Our goal is to improve the running times of parameterized algorithms for packing problems at the cost of computing approximate solutions. Consider a problem for which the best known polynomial-time approximation algorithm has approximation ratio $\beta$, as well as a parameter $k$. For any approximation ratio $\alpha$ that is better than $\beta$, if the value of an optimal solution is at least $k$, we seek a solution of value at least $\alpha k$, and otherwise we may return an arbitrary solution. Clearly, if the best known parameterized algorithm that finds a solution of value $t$ runs in time $O^*(f(t))$ for some function $f$, we are interested in running times better than $O^*(f(\alpha k))$.

We present tradeoffs between running times and approximation ratios in the context of the well-known {\sc $P_2$-Packing}, {\sc $3$-Set $k$-Packing} and {\sc $3$-Dimensional $k$-Matching ($3$D $k$-Matching)} problems, which are defined as follows.

\myparagraph{$P_2$-Packing} Given an undirected graph $G=(V,E)$ and a parameter $k\in\mathbb{N}$, we seek (in $G$) a set of $k$ (node-)disjoint simple paths on 3 nodes.

\myparagraph{$3$-Set $k$-Packing} Given a universe $E$, a family $\cal S$ of subsets of size 3 of $E$ and a parameter $k\in\mathbb{N}$, we seek a subfamily ${\cal S}'\subseteq {\cal S}$ of $k$ disjoint sets.

\myparagraph{$3$D $k$-Matching} Given disjoint universes $E_1$, $E_2$ and $E_3$, a family $\cal S$ of subsets of size 3 from $E_1\times E_2\times E_3$ and a parameter $k\in\mathbb{N}$, we seek a subfamily ${\cal S}'\subseteq {\cal S}$ of $k$ disjoint sets.

When we address the tradeoff versions of the above problems, we add $\alpha$ to their names. For example, given an instance $(G,k)$ of {\sc $P_2$-Packing}, as well as an accuracy parameter $\alpha\leq 1$, if $G$ has at least $k$ disjoint simple paths on 3 nodes, the {\sc $(\alpha,P_2)$-Packing} problem seeks a set of at least $\alpha k$ disjoint simple paths on 3 nodes, and otherwise it seeks an arbitrary set of such paths.

\subsection{Related Work}

The {\sc $3$-Set $k$-Packing}, {\sc $P_2$-Packing} and {\sc $3$D $k$-Matching} are well-studied problems, not only in the field of Parameterized Complexity. For example, the question of finding the largest $3$D-matching is a classic optimization problem, whose decision version is listed as one of the six fundamental NP-complete problems in Garey and Johnson~\cite{GareyJohnson79}. Clearly, {\sc $3$D Matching} is a special case of {\sc $3$-Set $k$-Packing}. By associating a set of three elements with every simple path on three nodes in a graph, it is also easy to see that {\sc $P_2$-Packing} is a special cases of {\sc $3$-Set $k$-Packing}.

In the past decade, the {\sc $3$-Set $k$-Packing} problem has enjoyed a race towards obtaining the fastest parameterized algorithm that solves it (see \cite{bjo10,impdetmatpac,chenalgorithmica2004,divandcol,fellowsalgorithmica2008,koutis2005,liutamc2007,multilineardetection,chenipec,wangcocoon2008,wangtamc2008,corrmatchpack,mixing}). Currently, the best deterministic algorithm runs in time $O^*(8.097^k)$ \cite{mixing}, and the best randomized algorithm runs in time $O^*(3.3432^k)$ \cite{bjo10}. Specialized parameterized algorithms for {\sc $P_2$-Packing} were given in \cite{p2packdet,p2packrand,p2packraible,p2pack2006,mixing}. Currently, the best deterministic algorithm runs in time $O^*(6.75^k)$ \cite{mixing} (based on \cite{p2packdet}), and the best randomized algorithm is the one for {\sc $3$-Set $k$-Packing} \cite{bjo10} (which runs in time $O^*(3.3432^k)$). Moreover, specialized parameterized algorithms for {\sc $3$D $k$-Matching} were given in \cite{bjo10,impdetmatpac,CLLSZ12,fsttcs13,KW09,chenipec,mixing}. Currently, the best deterministic algorithm runs in time $O^*(2.5961^{2k})$ \cite{mixing} (based on \cite{fsttcs13}), and the best randomized algorithm runs in time $O^*(2^k)$ \cite{bjo10}. Finally, we note that the best known (polynomial-time) approximation algorithm for {\sc $3$-Set $k$-Packing} has approximation ratio $\frac{3}{4}-\epsilon$ \cite{approxPackingbest}. This is also the best known (polynomial-time) approximation algorithm for {\sc $P_2$-Packing} and {\sc $3$D $k$-Matching}.
 
\subsection{Our Contribution and Organization}

In Section \ref{section:preliminaries}, we give necessary definitions and notation, including the definition of lopsided universal sets (of \cite{representative}). Then, in Section \ref{sec:approximateLopsided}, we define ``approximate lopsided universal sets'', and show how to compute them efficiently. In Section \ref{section:p2pack}, we develop a tradeoff-based algorithm for {\sc $P_2$-Packing}, which relies on two procedures: the main procedure combines a result by Feng {\em et al.}~\cite{p2packdet} with our computation of approximate universal sets; the second procedure (which also solves {\sc $3$-Set $(\alpha,k)$-Packing}) combines a partial execution of a known representative sets-based algorithm (from \cite{mixing}) and a known approximation algorithm by Cygan~\cite{approxPackingbest}. Section \ref{sec:Packing} presents a tradeoff-based algorithm for {\sc $3$-Set $k$-Packing}, which also relies on two procedures: the main procedure combines a simple and useful observation with algorithms from \cite{bjo10} and \cite{corrmatchpack}; the second procedure is the above mentioned second procedure of Section \ref{section:p2pack}. Finally, Appendix \ref{app:match} gives a tradeoff-based algorithm for {\sc $3$D $k$-Matching}, which is based on the same technique as the algorithm in Section \ref{sec:Packing}. The ideas underlying the design of our algorithms are intuitive and quite general, and may be used to develop parameterized approximation algorithms for other problems.

\section{Preliminaries}\label{section:preliminaries}

\myparagraph{Universal Sets} Roughly speaking, a lopsided universal set is a family of subsets, such that for any choice of disjoint sets $X$ and $Y$ of certain sizes, it contains a subset that captures all of the elements in $X$, but none of the elements in $Y$. Formally, it is defined as follows.

\begin{definition}\label{def:universalSet}
Given a universe $E$ of size $n$, we say that a family ${\cal F}\subseteq 2^E$ is an {\em $(n,k,p)$-universal set} if it satisfies the following condition: For every pair of sets $X\subseteq E$ of size $p$ and $Y\subseteq E\setminus X$ of size $k-p$, there is a set $F\in{\cal F}$ such that $X\subseteq F$ and $Y\cap F=\emptyset$.
\end{definition}

By the next result (of \cite{representative}), small lopsided universal sets can be computed efficiently.

\begin{theorem}[\cite{representative}]\label{theorem:universalSet}
There is a deterministic algorithm that computes an $(n,k,p)$-universal set of size $O(\displaystyle{{k\choose p}2^{o(k)}\log n})$ in time $O(\displaystyle{{k\choose p}2^{o(k)}n\log n})$.
\end{theorem}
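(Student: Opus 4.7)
The plan is a two-stage construction: first reduce the universe to polynomial-in-$k$ size via a splitter, then build a universal set on the reduced universe without any additional $\log n$ factor. For the first stage, I would invoke the classical Naor--Schulman--Srinivasan splitter construction to obtain a family $\mathcal{H}$ of hash functions $h:E\to[m]$ with $m=k^2$ and $|\mathcal{H}|=\mathrm{poly}(k)\cdot\log n=2^{o(k)}\log n$, computable in time $\mathrm{poly}(k)\cdot n\log n$, such that for every $k$-subset $S\subseteq E$ some $h\in\mathcal{H}$ is injective on $S$ (this is the usual $(n,k,k^2)$-splitter property). For the second stage, I would build an $(m,k,p)$-universal set $\mathcal{G}$ over the small universe $[m]$ of size $\binom{k}{p}2^{o(k)}$, now \emph{without} a $\log$ factor. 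I intend to do this by iterating the same splitter strategy on $[m]$: apply a splitter $[m]\to[s]$ for $s$ slightly smaller than $k$ (e.g.\ $s=k/\log^2 k$), recurse to obtain local universal sets with parameter $k/s$ on each part, and combine via the Vandermonde identity $\sum_{p_1+\cdots+p_s=p}\prod_j\binom{k/s}{p_j}=\binom{k}{p}$, which collapses the product of local binomials to the target count.

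Given $\mathcal{H}$ and $\mathcal{G}$, I would define the output family
\[
\mathcal{F}\ =\ \{\,h^{-1}(G)\ :\ h\in\mathcal{H},\ G\in\mathcal{G}\,\}.
\]
Correctness is immediate: for any disjoint $X,Y\subseteq E$ with $|X|=p$ and $|Y|=k-p$, pick $h\in\mathcal{H}$ injective on $X\cup Y$ so that $h(X)$ and $h(Y)$ are disjoint subsets of $[m]$ of the right sizes; then pick $G\in\mathcal{G}$ with $h(X)\subseteq G$ and $h(Y)\cap G=\emptyset$; the set $F:=h^{-1}(G)$ contains $X$ and misses $Y$. The size and running-time bounds follow by multiplication: $|\mathcal{F}|\leq|\mathcal{H}|\cdot|\mathcal{G}|=\binom{k}{p}2^{o(k)}\log n$, and enumerating pairs $(h,G)$ together with computing preimages costs $O\bigl(\binom{k}{p}2^{o(k)}n\log n\bigr)$ in total.

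The hard part will be the second stage. A plain brute force over all $p$-subsets of $[m]$ gives $\binom{m}{p}\approx m^p/p!$, which for $m=\Theta(k^2)$ is of order $k^p\binom{k}{p}$ --- too large by a $2^{\Omega(k)}$ factor whenever $p$ is linear in $k$. The iterated splitter construction fixes this, but one has to verify that the $\mathrm{poly}(k)$ overhead accumulated per recursion level does not compound into a $2^{\Omega(k)}$ blow-up: a recursion of depth $O(\log k)$ with $\mathrm{poly}(k)=2^{O(\log k)}$ overhead per level contributes only $2^{O(\log^2 k)}=2^{o(k)}$, which is within budget. Carefully balancing splitter ranges, recursion depth, and per-level overhead so that the claimed bound holds uniformly over all values of $p$ is where the bulk of the technical content lies.
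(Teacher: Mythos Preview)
Your proposal is correct and takes essentially the same route as the paper (which, in proving the generalized Theorem~\ref{theorem:appUniversalSet}, recapitulates the construction of~\cite{representative}): reduce the universe to size $k^2$ via a $k$-perfect hash family, then on the small universe split the $k$-set into blocks of size roughly $(\log k)^2$, take products of local universal sets over all splittings of $p$, and collapse the resulting product of binomials via Vandermonde. The only cosmetic differences are that the paper uses consecutive partitions of $E$ (Lemma~\ref{lemma:set3}) rather than a second layer of splitters for the divide step, and it bootstraps from a brute-force existence argument (Lemma~\ref{lemma:set1}) by alternating Lemmas~\ref{lemma:set2} and~\ref{lemma:set3} a constant number of times rather than phrasing the second stage as a single recursion of depth $O(\log k)$.
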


\myparagraph{Representative Sets} A representative family (in the context of uniform matroids) is defined as follows.\footnote{We added (in Definition \ref{def:repfam}) the reference to the universe $E'$, which does not appear in the definition of a representative family of \cite{representative}, to simplify the presentation of the paper.}

\begin{definition}\label{def:repfam}
Given universes $E'\subseteq E$, a family ${\cal S}$ of subsets of size $p$ of $E$, and a parameter $k\in\mathbb{N}$, we say that a subfamily $\widehat{\cal S}\subseteq{\cal S}$ {\em $(k-p)$-represents} $\cal S$ with respect to $E'$ if for any pair of sets $X\in{\cal S}$ and $Y\subseteq E'\setminus X$ such that $|Y|\leq k-p$, there is a set $\widehat{X}\in\widehat{\cal S}$ disjoint from $Y$.
\end{definition}

Roughly speaking, this definition implies that if a set $Y$ can be extended to a set of size at most $k$ by adding a set $X\in{\cal S}$, then it can also be extended to a set of the same size by adding a set $\widehat{X}\in\widehat{\cal S}$. Many dynamic programming-based parameterized algorithms rely on computations of representative sets to speed-up their running times. We will use partial executions of such algorithms as black boxes.

\myparagraph{Notation} Given a graph $G=(V,E)$, a {\em $P_2$-Packing} is a set of disjoint paths (in $G$) on 3 nodes. Moreover, a 3-set is a set of 3 elements, and given a family $\cal S$ of 3-subsets, a 3-set packing is a subfamily of disjoint 3-sets from $\cal S$.

\section{Approximate Lopsided Universal Sets}\label{sec:approximateLopsided}

We first generalize Definition \ref{def:universalSet} to be suitable for approximation algorithms. The new definition makes use of an accuracy parameter, $0<\alpha\leq 1$. When $\alpha=1$, we obtain Definition \ref{def:universalSet}, and otherwise we obtain a more relaxed definition.

\begin{definition}\label{def:appUniversalSet}
Given a universe $E$ of size $n$, we say that a family ${\cal F}\subseteq 2^E$ is an {\em $(n,k,p,\alpha)$-universal set} if it satisfies the following condition: For every pair of sets $X\subseteq E$ of size $p$ and $Y\subseteq E\setminus X$ of size $k-p$, there is a set $F\in{\cal F}$ such that $|X\cap F|\geq\alpha p$, and $Y\cap F=\emptyset$.
\end{definition}

Now, we claim that small approximate lopsided universal sets (i.e., $(n,k,p,\alpha)$-universal sets) can be computed efficiently. Observe that when $\alpha=1$, we obtain the result stated in Theorem \ref{theorem:universalSet}.

\begin{theorem}\label{theorem:appUniversalSet}
There is a deterministic algorithm that computes an $(n,k,p,\alpha)$-universal set of size $O(\displaystyle{\frac{{k \choose \alpha p}}{{p\choose \alpha p}}2^{o(k)}\log n})$ in time $O(\displaystyle{\frac{{k \choose \alpha p}}{{p\choose \alpha p}}2^{o(k)}n\log n})$.
\end{theorem}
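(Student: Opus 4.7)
My plan is to combine a probabilistic argument with the splitter-based derandomization machinery that already underlies Theorem~\ref{theorem:universalSet}. The key observation is that Definition~\ref{def:appUniversalSet} saves a factor of $\binom{p}{\alpha p}$ relative to Definition~\ref{def:universalSet}, because each pair $(X,Y)$ now has $\binom{p}{\alpha p}$ potential witnessing configurations (any $\alpha p$-subset of $X$ being captured by $F$ suffices) rather than only one.

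First I would analyze a random construction: form each $F$ by including each element of $E$ independently with probability $q$. For a fixed pair $(X,Y)$ with $|X|=p$ and $|Y|=k-p$ disjoint, the probability that $|X\cap F|\geq \alpha p$ and $Y\cap F=\emptyset$ is at least $\binom{p}{\alpha p}\,q^{\alpha p}(1-q)^{k-\alpha p}$ (isolating the binomial term in which $F$ captures exactly $\alpha p$ distinguished elements of $X$ and avoids both $Y$ and the remaining $p-\alpha p$ elements of $X$). Optimizing at $q=\alpha p/k$ and applying Stirling bounds this probability from below by $\binom{p}{\alpha p}/\binom{k}{\alpha p}$ up to a factor polynomial in $k$. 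A union bound over the at most $n^{O(k)}$ relevant pairs then shows that a random family of size $O(\binom{k}{\alpha p}/\binom{p}{\alpha p}\cdot k\log n)$ is an $(n,k,p,\alpha)$-universal set with positive probability.

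To derandomize, I would piggy-back on the construction used to prove Theorem~\ref{theorem:universalSet} in \cite{representative}. That construction can be decomposed into (i) a Naor-Schulman-Srinivasan-style splitter reducing the universe $E$ to one of size subexponential in $k$, at a multiplicative cost of $2^{o(k)}\log n$, and (ii) an explicit combinatorial construction on the small universe that contributes the $\binom{k}{p}$ factor. Component~(i) would be reused verbatim, while component~(ii) would be replaced by an enumeration meeting only the weaker requirement of Definition~\ref{def:appUniversalSet}: on a subexponentially small universe, such an enumeration of size $\binom{k}{\alpha p}/\binom{p}{\alpha p}\cdot 2^{o(k)}$ can be produced by the method of conditional expectations applied to the explicit estimator from the previous paragraph, or alternatively by porting the structured enumeration of \cite{representative} with the relaxed covering requirement and parameter $\alpha p$ in place of $p$. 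Multiplying the contributions from (i) and the modified (ii) yields the claimed size and running time bounds.

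I expect the main obstacle to be verifying that component~(ii) can be implemented deterministically while losing at most a $2^{o(k)}$ factor relative to the probabilistic bound. The random analysis is routine and the splitter framework is standard, but faithfully adapting the innermost enumeration of \cite{representative} to the covering-style requirement (``some $\alpha p$-subset of $X$ lies in $F$'' rather than ``$X$ itself lies in $F$'') without further loss requires care in tracking the parameters through the splitter; in particular, one must check that the subexponential slack is not blown up when the accuracy parameter $\alpha$ interacts with the bucket sizes of the underlying splitter.
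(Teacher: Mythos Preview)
Your proposal is correct and follows essentially the same route as the paper: the probabilistic argument you give is exactly Lemma~\ref{lemma:set1}, and your plan to reuse the splitter machinery of \cite{representative} with the parameter $\alpha p$ in place of $p$ is precisely what the paper does via Lemmas~\ref{lemma:set2} and~\ref{lemma:set3}. The obstacle you flag---tracking the $\alpha$-dependent factor $(1-\alpha)^{(1-\alpha)p}\alpha^{\alpha p}\approx\binom{p}{\alpha p}^{-1}$ through the bucket sizes---is handled in the paper not by a single splitter application but by an \emph{iterated} alternation of Lemma~\ref{lemma:set2} (hash the universe down to size $k^2$) and Lemma~\ref{lemma:set3} (split into blocks of size $s=\lfloor(\log k)^2\rfloor$), applied three times starting from the brute-force base case; this bootstrapping is what drives the overhead down to $2^{O(k/\log\log k)}=2^{o(k)}$.
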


The proof of the above theorem is based on the proof of Theorem \ref{theorem:universalSet} (given in \cite{representative}). That is, we generalize the arguments given in \cite{representative}, taking into account the accuracy parameter $\alpha$. Towards the proof of Theorem \ref{theorem:appUniversalSet}, we need to prove three lemmas. Then, by repeatedly applying these lemmas, we will be able to prove the correctness of Theorem \ref{theorem:appUniversalSet}. We start with a lemma that presents an algorithm that is very slow, but computes approximate lopsided universal sets of the desired~size.

\begin{lemma}\label{lemma:set1}
There is a deterministic algorithm that computes an $(n,k,p,\alpha)$-universal set of size $\zeta(n,k,p,\alpha)$ in time $\tau(n,k,p,\alpha)$, where
\begin{itemize}
\item $\zeta(n,k,p,\alpha)=O(\displaystyle{\frac{{k \choose \alpha p}}{{p\choose \alpha p}}\cdot k^{O(1)}\log n})$.
\item $\tau(n,k,p,\alpha)=O(\displaystyle{{2^n \choose \zeta(n,k,p,\alpha)}\cdot n^{O(k)}})$.
\end{itemize}
\end{lemma}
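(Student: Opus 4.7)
The plan is to establish the lemma in two stages: first, a probabilistic existence argument showing that some $(n,k,p,\alpha)$-universal set of size $\zeta(n,k,p,\alpha)$ exists, and second, a brute-force derandomization that achieves the stated running time $\tau(n,k,p,\alpha)$. The algorithmic stage is essentially dictated by the factor ${2^n \choose \zeta(n,k,p,\alpha)}$: assuming existence, iterate over every subfamily ${\cal F}\subseteq 2^E$ with $|{\cal F}|=\zeta(n,k,p,\alpha)$, and for each candidate verify the defining property of Definition \ref{def:appUniversalSet} by enumerating all at most ${n\choose p}{n-p\choose k-p}=n^{O(k)}$ disjoint pairs $(X,Y)$ with $|X|=p$, $|Y|=k-p$ and checking in polynomial time whether some $F\in{\cal F}$ satisfies $|X\cap F|\geq\alpha p$ and $Y\cap F=\emptyset$; return the first witness found.

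For the existential stage, I would analyze the following random construction: draw $\zeta=\zeta(n,k,p,\alpha)$ subsets of $E$ independently at random, where each subset $F$ contains every element of $E$ independently with probability $q=\alpha p/k$. Fix a disjoint pair $(X,Y)$ with $|X|=p$ and $|Y|=k-p$. The probability that a single random $F$ covers $(X,Y)$ in the sense of Definition \ref{def:appUniversalSet} is at least the probability of the more restrictive subevent $\{|X\cap F|=\alpha p\}\cap\{Y\cap F=\emptyset\}$, and since $X$ and $Y$ are disjoint these two conditions are independent, yielding
\[
{p\choose\alpha p}q^{\alpha p}(1-q)^{p-\alpha p}\cdot(1-q)^{k-p} \;=\; {p\choose\alpha p}\left(\frac{\alpha p}{k}\right)^{\alpha p}\left(\frac{k-\alpha p}{k}\right)^{k-\alpha p}.
\]
The choice $q=\alpha p/k$ is exactly what one gets by optimizing the above expression in $q$, and a standard Stirling estimate ${k\choose\alpha p}=\Theta\bigl(\frac{1}{\sqrt{k}}(k/\alpha p)^{\alpha p}(k/(k-\alpha p))^{k-\alpha p}\bigr)$ rewrites this lower bound as $\Omega\bigl({p\choose\alpha p}/(\sqrt{k}\cdot{k\choose\alpha p})\bigr)$.

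Since the number of disjoint pairs $(X,Y)$ of the required sizes is at most $n^k$, a union bound shows that choosing $\zeta=\Theta\bigl({k\choose\alpha p}/{p\choose\alpha p}\cdot k^{O(1)}\log n\bigr)$ with a sufficiently large constant makes the probability that some pair is uncovered by all $\zeta$ sampled sets strictly less than $1$. Hence at least one family of this size is an $(n,k,p,\alpha)$-universal set, and the exhaustive enumeration described above must find one, yielding the claimed $\zeta(n,k,p,\alpha)$ and $\tau(n,k,p,\alpha)$. I expect the main technical care needed in executing this plan to be (i) handling the non-integrality of $\alpha p$ by working with $\lceil\alpha p\rceil$ throughout and confirming the asymptotics are unaffected, and (ii) pinning down the $k^{O(1)}$ factor arising from the Stirling $\sqrt{k}$ together with the $\ln(n^k)=k\ln n$ from the union bound; both are routine once the above setup is in place.
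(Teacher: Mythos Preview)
Your proposal is correct and follows essentially the same approach as the paper: a probabilistic existence argument using independent element-inclusion with probability $\alpha p/k$, the single-set success bound ${p\choose\alpha p}(\alpha p/k)^{\alpha p}(1-\alpha p/k)^{k-\alpha p}$, a union bound over the at most $n^k$ pairs $(X,Y)$, and brute-force derandomization by enumerating all $\binom{2^n}{\zeta}$ candidate families and testing each in $n^{O(k)}$ time. The paper omits the Stirling step and the integrality discussion you mention, but otherwise the arguments coincide.
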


\begin{proof}
First, we give a randomized algorithm which constructs, with positive probability, an $(n,k,p,\alpha)$-universal set of the desired size, $\zeta$. We then show how to deterministically construct an $(n,k,p,\alpha)$-universal set of the desired size, $\zeta$, in the desired time, $\tau$. Let $\displaystyle{t=\frac{\frac{k^k}{(\alpha p)^{\alpha p}(k-\alpha p)^{k-\alpha p}}}{{p\choose \alpha p}}(k+1)\ln n}$, and construct the family ${\cal F}=\{F_1,\ldots,F_t\}$ as follows. For each $i\in\{1,\ldots,t\}$ and element $e\in E$, insert $e$ to $F_i$ with probability $\displaystyle{\frac{\alpha p}{k}}$. The construction of different sets in ${\cal F}$, as well as the insertion of different elements into each set in ${\cal F}$, are independent. Clearly, $\zeta(n,k,p,\alpha)=t$ is within the required bound.

For fixed sets $X\subseteq E$ of size $p$, $Y\subseteq E\setminus X$ of size $k-p$, and $F\in{\cal F}$, the probability that $|X\cap F|=\alpha p$ and $Y\cap F=\emptyset$ is $\displaystyle{{p \choose \alpha p}\cdot(\frac{\alpha p}{k})^{\alpha p}(1-\frac{\alpha p}{k})^{k-\alpha p}}=\displaystyle{{p \choose \alpha p}\cdot\frac{(\alpha p)^{\alpha p}(k-\alpha p)^{k-\alpha p}}{k^k} =}$ $\displaystyle{\frac{(k+1)\ln n}{t}}$. Thus, the probability that {\em no} set $F\in{\cal F}$ satisfies $X\subseteq F$ and $Y\cap F=\emptyset$ is $\displaystyle{(1 - \frac{(k+1)\ln n}{t})^t\leq e^{-(k+1)\ln n}}=n^{-k-1}$. There are at most $n^k$ choices for $X$ and $Y$ as specified above; thus, applying the union bound, the probability that there exist such $X$ and $Y$ for which there no set $F\in{\cal F}$ that satisfies $|X\cap F|\geq\alpha p$ and $Y\cap F=\emptyset$, is at most $n^{-k-1}\cdot n^k=1/n$.

So far, we have given a randomized algorithm that constructs an $(n,k,p,\alpha)$-universal set of the desired size, $\zeta$, with probability at least $1-1/n>0$. To deterministically construct $\cal F$ in time bounded by $\tau$, we iterate over all families of $t$ subsets of $E$ (there are ${2^n\choose \zeta}$ such families), where for each family $\cal F$, we test in time $n^{O(k)}$ whether for any pair of sets $X\subseteq E$ of size $p$ and $Y\subseteq E\setminus X$ of size $k-p$, there is a set $F\in{\cal F}$ such that $|X\cap F|\geq\alpha p$ and $Y\cap F=\emptyset$.
\end{proof}

Next, we present a lemma using which we will be able to improve the running time of the algorithm in Lemma \ref{lemma:set1}. The proof of this lemma is almost identical to the proof of the corresponding lemma in \cite{representative}. For the sake of completeness, we give the proof in Appendix \ref{app:set2}.

\begin{lemma}\label{lemma:set2}
Given a deterministic algorithm that computes an $(n,k,p,\alpha)$-universal set of size $\zeta(n,k,p,\alpha)$ in time $\tau(n,k,p,\alpha)$, there is a deterministic algorithm that computes an $(n,k,p,\alpha)$-universal set of size $\zeta'(n,k,p,\alpha)$ in time $\tau'(n,k,p,\alpha)$, where
\begin{itemize}
\item $\zeta'(n,k,p,\alpha)=O(\zeta(k^2,k,p,\alpha)\cdot k^{O(1)}\log n)$.
\item $\tau'(n,k,p,\alpha)=O(\tau(k^2,k,p,\alpha)+\zeta'(n,k,p,\alpha)\cdot n)$.
\end{itemize}
\end{lemma}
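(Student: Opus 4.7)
The plan is to reduce the universe size from $n$ to $k^{2}$ by means of a small family of hash functions, and then apply the given algorithm to the small universe. Concretely, I would use an $(n,k,k^{2})$-splitter in the sense of Naor--Schulman--Srinivasan: a family $\mathcal{H}$ of functions $h\colon E\to[k^{2}]$ of size $k^{O(1)}\log n$, computable in time $k^{O(1)}n\log n$, such that for every set $Z\subseteq E$ with $|Z|\le k$ there is some $h\in\mathcal{H}$ that is injective on $Z$. Such splitters are standard in the representative-sets literature and are exactly what is used in the proof of Theorem \ref{theorem:universalSet} in \cite{representative}.

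Given $\mathcal{H}$, I would invoke the hypothesized algorithm \emph{once} to compute an $(k^{2},k,p,\alpha)$-universal set $\mathcal{F}_{0}\subseteq 2^{[k^{2}]}$ of size $\zeta(k^{2},k,p,\alpha)$ in time $\tau(k^{2},k,p,\alpha)$. Then I would define
\[
\mathcal{F} \;=\; \bigl\{\, h^{-1}(F) \;:\; h\in\mathcal{H},\ F\in\mathcal{F}_{0}\,\bigr\}.
\]
For correctness, fix $X\subseteq E$ of size $p$ and $Y\subseteq E\setminus X$ of size $k-p$. Since $|X\cup Y|=k$, the splitter provides some $h\in\mathcal{H}$ injective on $X\cup Y$; in particular $h(X)$ and $h(Y)$ are disjoint subsets of $[k^{2}]$ of sizes $p$ and $k-p$. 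By the universal-set property of $\mathcal{F}_{0}$, there is $F\in\mathcal{F}_{0}$ with $|h(X)\cap F|\ge\alpha p$ and $h(Y)\cap F=\emptyset$. Taking $F'=h^{-1}(F)\in\mathcal{F}$, the injectivity of $h$ on $X$ gives $|X\cap F'|=|h(X)\cap F|\ge\alpha p$, while $h(Y)\cap F=\emptyset$ immediately yields $Y\cap F'=\emptyset$, as required by Definition \ref{def:appUniversalSet}.

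For the quantitative bounds, $|\mathcal{F}|\le|\mathcal{H}|\cdot|\mathcal{F}_{0}|=\zeta(k^{2},k,p,\alpha)\cdot k^{O(1)}\log n=\zeta'(n,k,p,\alpha)$. The running time is the cost of building $\mathcal{H}$ (polynomial in $n,k$), plus $\tau(k^{2},k,p,\alpha)$ for computing $\mathcal{F}_{0}$, plus $O(n)$ for each of the $\zeta'(n,k,p,\alpha)$ pull-backs $h^{-1}(F)$, which is bounded by $O(\tau(k^{2},k,p,\alpha)+\zeta'(n,k,p,\alpha)\cdot n)$ since the splitter-construction cost is absorbed into $\zeta'(n,k,p,\alpha)\cdot n$.

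The only place where the proof diverges from the exact-universal-set argument of \cite{representative} is in verifying the $\alpha$-capturing clause, and this step is essentially free: once $h$ is injective on $X$, the identity $|X\cap h^{-1}(F)|=|h(X)\cap F|$ transports the fractional guarantee $|h(X)\cap F|\ge\alpha p$ back to $E$ without loss. Hence the main (and only) conceptual obstacle is to notice that a single splitter $\mathcal{H}$, applied to the approximate universal set $\mathcal{F}_{0}$ over $[k^{2}]$, behaves correctly with respect to the new, relaxed definition; the rest is bookkeeping.
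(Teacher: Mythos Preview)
Your proposal is correct and is essentially identical to the paper's own proof: the paper also pulls back a $(k^{2},k,p,\alpha)$-universal set through a $k$-perfect family of hash functions $E\to[k^{2}]$ of size $k^{O(1)}\log n$ (it cites the Alon--Yuster--Zwick color-coding construction rather than NSS, but this is the same object as your splitter), and the correctness and size/time bookkeeping match yours line for line. Your explicit remark that injectivity of $h$ on $X$ gives $|X\cap h^{-1}(F)|=|h(X)\cap F|$ is the one detail the paper leaves implicit.
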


Next, we present another lemma, which is also necessary to improve the running time of the algorithm in Lemma \ref{lemma:set1}.  Again, the proof of this lemma is almost identical to the proof of the corresponding lemma in \cite{representative}. For the sake of completeness, we give the proof in Appendix \ref{app:set3}. In this lemma, $s=\lfloor(\log k)^2\rfloor$ and $t=\lceil k/s\rceil$. Moreover, we let ${\cal Z}_{s,t}^p$ denote the set of all $t$-tuples $(p_1,p_2,\ldots,p_t)$ of integers such that $\sum_{i=1}^tp_i=p$, and $0\leq p_i\leq s$ for all $i\in\{1,2,\ldots,t\}$. Clearly, $|{\cal Z}_{s,t}^p|\leq \displaystyle{{p+t-1 \choose t-1}}\leq 2^{t\log(t+p)}$.

\begin{lemma}\label{lemma:set3}
Given a deterministic algorithm that computes an $(n,k,p,\alpha)$-universal set of size $\zeta(n,k,p,\alpha)$ in time $\tau(n,k,p,\alpha)$, there is a deterministic algorithm that computes an $(n,k,p,\alpha)$-universal set of size $\zeta'(n,k,p,\alpha)$ in time $\tau'(n,k,p,\alpha)$, where
\begin{itemize}
\item $\zeta'(n,k,p,\alpha)=O(\displaystyle{2^{O(t\log n)}\cdot\sum_{(p_1,\ldots,p_t)\in{\cal Z}_{s,t}^p}\prod_{i=1}^t\zeta(n,s,p_i,\alpha)})$.
\item $\tau'(n,k,p,\alpha)=O(\displaystyle{\sum_{\widehat{p}=1}^s\tau(n,s,\widehat{p},\alpha)+\zeta'(n,k,p,\alpha)\cdot n^{O(1)}})$.
\end{itemize}
\end{lemma}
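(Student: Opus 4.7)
The plan is to reduce the construction for parameter $k$ to constructions for parameter $s$ by means of an $(n,k,t)$-splitter: a family ${\cal H}$ of functions $\phi:E\to[t]$ such that for every $S\subseteq E$ of size $k$, some $\phi\in{\cal H}$ satisfies $|S\cap\phi^{-1}(i)|\leq\lceil k/t\rceil\leq s$ for all $i\in\{1,\ldots,t\}$. A classical construction (implicit in the proof of the analogous lemma in \cite{representative}) produces such a family of size $2^{O(t\log n)}$ in time $2^{O(t\log n)}\cdot n^{O(1)}$.

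The algorithm proceeds in two phases. First, I would invoke the given algorithm once for each $\widehat{p}\in\{1,\ldots,s\}$ to obtain an $(n,s,\widehat{p},\alpha)$-universal set ${\cal F}_{\widehat{p}}$ on $E$; this costs $\sum_{\widehat{p}=1}^{s}\tau(n,s,\widehat{p},\alpha)$. Second, for each $\phi\in{\cal H}$, each tuple $(p_1,\ldots,p_t)\in{\cal Z}_{s,t}^{p}$, and each choice of $F_i\in{\cal F}_{p_i}$ for $i\in\{1,\ldots,t\}$ (with ${\cal F}_{0}=\{\emptyset\}$), I would add the set $F=\bigcup_{i=1}^{t}(F_i\cap\phi^{-1}(i))$ to the output family ${\cal F}'$. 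Multiplying $|{\cal H}|$ by $\sum_{(p_1,\ldots,p_t)\in{\cal Z}_{s,t}^{p}}\prod_{i=1}^{t}\zeta(n,s,p_i,\alpha)$ and paying $n^{O(1)}$ per assembled set gives exactly the claimed $\zeta'$ and $\tau'$ bounds.

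For correctness, fix disjoint $X\subseteq E$ of size $p$ and $Y\subseteq E\setminus X$ of size $k-p$, let $S=X\cup Y$, and choose $\phi\in{\cal H}$ that splits $S$ into blocks of size at most $s$. Set $X_i=X\cap\phi^{-1}(i)$, $Y_i=Y\cap\phi^{-1}(i)$ and $p_i=|X_i|$, so $\sum_{i}p_i=p$, $|X_i|+|Y_i|\leq s$, and $(p_1,\ldots,p_t)\in{\cal Z}_{s,t}^{p}$. To invoke universality of ${\cal F}_{p_i}$, I would extend $Y_i$ arbitrarily inside $E\setminus X_i$ to some $Y_i'$ of size exactly $s-p_i$ (feasible since $n\gg k$ after the preceding reductions); Definition \ref{def:appUniversalSet} then yields $F_i\in{\cal F}_{p_i}$ with $|X_i\cap F_i|\geq\alpha p_i$ and $Y_i'\cap F_i=\emptyset$, and hence also $Y_i\cap F_i=\emptyset$. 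The set $F=\bigcup_{i}(F_i\cap\phi^{-1}(i))$ added by the algorithm for the triple $(\phi,(p_i),(F_i))$ therefore satisfies $|X\cap F|=\sum_{i}|X_i\cap F_i|\geq\alpha p$ and $Y\cap F=\bigcup_{i}(Y_i\cap F_i)=\emptyset$, as required.

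The main obstacle is bookkeeping rather than any deep difficulty. One must track carefully that ${\cal F}_{p_i}$ lives over the full universe $E$, so it is the intersection $F_i\cap\phi^{-1}(i)$ (and not $F_i$ itself) that contributes to the $i$-th block, and the padding of $Y_i$ is an artifact of the correctness proof that is never performed inside the algorithm. The splitter size bound $2^{O(t\log n)}$ is somewhat crude compared to the sharper $t^{O(t)}\log n$, but because this lemma is invoked after Lemma \ref{lemma:set2} has already reduced $n$ to $k^{O(1)}$ and $t=O(k/\log^{2}k)$, this factor becomes $2^{o(k)}$ and is harmless for the final bound in Theorem \ref{theorem:appUniversalSet}.
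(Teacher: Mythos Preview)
Your proposal is correct and follows essentially the same approach as the paper. The only cosmetic difference is that you phrase the first ingredient as an abstract $(n,k,t)$-splitter, whereas the paper (following \cite{representative}) realizes this splitter concretely via the family of all consecutive partitions of $E=\{1,\ldots,n\}$ into $t$ intervals, which has size ${n+t-1\choose t-1}=2^{O(t\log n)}$; the paper also pads $Y$ up front to size $st-p$ (proving $(n,st,p,\alpha)$-universality) rather than padding each $Y_i$ separately, but the blockwise application of the $(n,s,p_i,\alpha)$-universal sets and the summation $\sum_i\alpha p_i=\alpha p$ are identical.
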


We now turn to prove Theorem \ref{theorem:appUniversalSet}. Recall that the proof is structured as follows. We start by considering the algorithm in Lemma \ref{lemma:set1}, and then we repeatedly apply Lemmas \ref{lemma:set2} and \ref{lemma:set3} in order to obtain the desired algorithm.

\begin{proof}
First, by Lemma \ref{lemma:set1}, we have an algorithm that computes an $(n,k,p,\alpha)$-universal set of size $\zeta^1(n,k,p,\alpha)$ in time $\tau^1(n,k,p,\alpha)$, where
\begin{itemize}
\item $\zeta^1(n,k,p,\alpha)=O(\displaystyle{\frac{{k \choose \alpha p}}{{p\choose \alpha p}}\cdot k^{O(1)}\log n})$.
\item $\tau^1(n,k,p,\alpha)=O(\displaystyle{{2^n \choose \zeta^1(n,k,p,\alpha)}\cdot n^{O(k)}})$.
\end{itemize}

Observe that $\displaystyle{{2^{k^2} \choose \zeta^1(k^2,k,p,\alpha)}\cdot k^{O(k)} = 2^{k^{O(k)}}}$. Thus, by Lemma \ref{lemma:set2}, we have an algorithm that computes an $(n,k,p,\alpha)$-universal set of size $\zeta^2(n,k,p,\alpha)$ in time $\tau^2(n,k,p,\alpha)$, where
\begin{itemize}
\item $\zeta^2(n,k,p,\alpha)=O(\displaystyle{\frac{{k \choose \alpha p}}{{p\choose \alpha p}}\cdot k^{O(1)}\log n})$.
\item $\tau^2(n,k,p,\alpha)=O(\displaystyle{2^{k^{O(k)}} + \frac{{k \choose \alpha p}}{{p\choose \alpha p}}\cdot k^{O(1)}n\log n})$.
\end{itemize}

By applying Lemma \ref{lemma:set3}, we have an algorithm that computes an $(n,k,p,\alpha)$-universal set of size $\zeta^3(n,k,p,\alpha)$ in time $\tau^3(n,k,p,\alpha)$, where

\begin{itemize}
\item $\zeta^3(n,k,p,\alpha)=O(\displaystyle{2^{O(t\log n)}\cdot\sum_{(p_1,\ldots,p_t)\in{\cal Z}_{s,t}^p}\prod_{i=1}^t\zeta^2(n,s,p_i,\alpha)})$

$= O(\displaystyle{2^{O(t\log n)}\cdot\sum_{(p_1,\ldots,p_t)\in{\cal Z}_{s,t}^p}\prod_{i=1}^t\frac{{s \choose \alpha p_i}}{{p_i\choose \alpha p_i}}\cdot s^{O(1)}\log n})$

$= O(\displaystyle{2^{O(t\log n)}\cdot\max_{(p_1,\ldots,p_t)\in{\cal Z}_{s,t}^p}\prod_{i=1}^t\frac{{s \choose \alpha p_i}}{\frac{p_i^{p_i}}{(\alpha p_i)^{\alpha p_i}\cdot((1-\alpha)p_i)^{(1-\alpha)p_i}}}})$

$= O(\displaystyle{2^{O(t\log n)}\cdot\max_{(p_1,\ldots,p_t)\in{\cal Z}_{s,t}^p}\prod_{i=1}^t\frac{{s \choose \alpha p_i}}{\frac{1}{\alpha^{\alpha p_i}\cdot(1-\alpha)^{(1-\alpha)p_i}}}})$

$= O(\displaystyle{2^{O(t\log n)}\cdot(1-\alpha)^{(1-\alpha)p}\alpha^{\alpha p}\cdot\max_{(p_1,\ldots,p_t)\in{\cal Z}_{s,t}^p}\prod_{i=1}^t{s \choose \alpha p_i}})$

$= O(\displaystyle{2^{O(t\log n)}\cdot(1-\alpha)^{(1-\alpha)p}\alpha^{\alpha p}\cdot {k \choose \alpha p}})$.

\item $\tau^3(n,k,p,\alpha)=O(\displaystyle{\sum_{\widehat{p}=1}^s\tau^2(n,s,\widehat{p},\alpha)+\zeta^3(n,k,p,\alpha)\cdot n^{O(1)}})$

$=O(\displaystyle{2^{s^{O(s)}}+2^{O(t\log n)}\cdot(1-\alpha)^{(1-\alpha)p}\alpha^{\alpha p}\cdot {k \choose \alpha p}})$

$=O(\displaystyle{2^{(\log k)^{O(\log^2k)}}+2^{O(t\log n)}\cdot(1-\alpha)^{(1-\alpha)p}\alpha^{\alpha p}\cdot {k \choose \alpha p}})$.
\end{itemize}

Next, by applying Lemma \ref{lemma:set2} again, we have an algorithm that computes an $(n,k,p,\alpha)$-universal set of size $\zeta^4(n,k,p,\alpha)$ in time $\tau^4(n,k,p,\alpha)$, where
\begin{itemize}
\item $\zeta^4(n,k,p,\alpha)=O(\displaystyle{2^{O(\frac{k}{\log k})}\cdot(1-\alpha)^{(1-\alpha)p}\alpha^{\alpha p}\cdot {k \choose \alpha p} \cdot\log n})$.
\item $\tau^4(n,k,p,\alpha)=O(\displaystyle{2^{(\log k)^{O(\log^2k)}} + 2^{O(\frac{k}{\log k})}\cdot(1-\alpha)^{(1-\alpha)p}\alpha^{\alpha p}\cdot {k \choose \alpha p}\cdot n\log n})$.
\end{itemize}

Also, by applying Lemma \ref{lemma:set3} again, we have an algorithm that computes an $(n,k,p,\alpha)$-universal set of size $\zeta^5(n,k,p,\alpha)$ in time $\tau^5(n,k,p,\alpha)$, where
\begin{itemize}
\item $\zeta^5(n,k,p,\alpha)=O(\displaystyle{2^{O(t\log n)}\cdot\sum_{(p_1,\ldots,p_t)\in{\cal Z}_{s,t}^p}\prod_{i=1}^t\zeta^4(n,s,p_i,\alpha)})$

$=O(\displaystyle{2^{O(t\log n)}\cdot\sum_{(p_1,\ldots,p_t)\in{\cal Z}_{s,t}^p}\prod_{i=1}^t2^{O(\frac{s}{\log s})}\cdot(1-\alpha)^{(1-\alpha)p_i}\alpha^{\alpha p_i}\cdot {s \choose \alpha p_i} \cdot\log n})$

$=O(\displaystyle{2^{O(t\log n)}\cdot2^{O(\frac{k}{\log\log k})}\cdot(1-\alpha)^{(1-\alpha)p}\alpha^{\alpha p}\cdot\max_{(p_1,\ldots,p_t)\in{\cal Z}_{s,t}^p}\prod_{i=1}^t {s \choose \alpha p_i}})$

$=O(\displaystyle{2^{O(t\log n)}\cdot2^{O(\frac{k}{\log\log k})}\cdot(1-\alpha)^{(1-\alpha)p}\alpha^{\alpha p}\cdot{k \choose \alpha p}})$.

\item $\tau^5(n,k,p,\alpha)=O(\displaystyle{\sum_{\widehat{p}=1}^s\tau^4(n,s,\widehat{p},\alpha)+\zeta^4(n,k,p,\alpha)\cdot n^{O(1)}})$

$=O(\displaystyle{2^{(\log\log k)^{O(\log^2\log k)}} + 2^{O(t\log n)}\cdot2^{O(\frac{k}{\log\log k})}\cdot(1-\alpha)^{(1-\alpha)p}\alpha^{\alpha p}\cdot{k \choose \alpha p}})$

$=O(\displaystyle{2^{O(t\log n)}\cdot2^{O(\frac{k}{\log\log k})}\cdot(1-\alpha)^{(1-\alpha)p}\alpha^{\alpha p}\cdot{k \choose \alpha p}})$.
\end{itemize}

For the last transition above, observe that $\displaystyle{2^{(\log\log k)^{O(\log^2\log k)}}=2^{O(\frac{k}{\log\log k})}}$. Finally, by applying Lemma \ref{lemma:set2} again, we have an algorithm that computes an $(n,k,p,\alpha)$-universal set of size $\zeta^5(n,k,p,\alpha)$ in time $\tau^5(n,k,p,\alpha)$, where
\begin{itemize}
\item $\zeta^6(n,k,p,\alpha)=O(\displaystyle{2^{O(\frac{k}{\log\log k})}\cdot(1-\alpha)^{(1-\alpha)p}\alpha^{\alpha p}\cdot{k \choose \alpha p}\cdot\log n})$

$=O(\displaystyle{2^{O(\frac{k}{\log\log k})}\cdot\frac{{k \choose \alpha p}}{{p\choose\alpha p}}\cdot\log n})$.

\item $\tau^6(n,k,p,\alpha)=O(\displaystyle{2^{O(\frac{k}{\log\log k})}\cdot\frac{{k \choose \alpha p}}{{p\choose\alpha p}}\cdot n\log n})$.
\end{itemize}

The last algorithm is the desired one, which concludes the proof.
\end{proof}

\section{An Algorithm for {\sc $P_2$-Packing}}\label{section:p2pack}

In this section, we develop a parameterized algorithm that finds approximate solutions for {\sc $P_2$-Packing}. First, in Section \ref{section:packPro1}, we develop a procedure based on approximate lopsided universal sets and a polynomial-time algorithm by Feng {\em et al.}~\cite{p2packdet} for a special case of {\sc $P_2$-Packing}, which will be efficient when the value of $\alpha$ is large. For this procedure, \alg{Pack1}, we will prove the following result.

\begin{lemma}\label{lemma:packPro1}
Given an instance $(G=(V,E),k)$ of {\sc $P_2$-Packing}, as well as an accuracy parameter $\alpha\leq 1$, \alg{Pack1} solves {\sc $(\alpha,P_2)$-Packing} in deterministic time $O^*(2^{o(k)}\cdot\displaystyle{\frac{{3k \choose \alpha k}}{{k\choose \alpha k}}})$.
\end{lemma}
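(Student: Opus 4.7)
The plan is to use an approximate lopsided universal set (Theorem~\ref{theorem:appUniversalSet}) to guess a vertex set $F\subseteq V$ that contains most of the centers of an optimal $P_2$-packing while avoiding all of its endpoints, and then feed $F$ into the polynomial-time algorithm of Feng et al.~\cite{p2packdet} as a black box. The point is that once we know which vertices are designated as centers (and are forbidden from playing the role of leaves), $P_2$-Packing becomes polynomially solvable.

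Concretely, it suffices to handle the case in which $G$ admits a $P_2$-packing of size at least $k$, since otherwise any packing may be returned. Fix an arbitrary such packing $\mathcal{P}^\star$ of size $k$, let $C\subseteq V$ denote the set of its centers and $L\subseteq V$ the set of its remaining endpoints, so $|C|=k$, $|L|=2k$, and $C\cap L=\emptyset$. First I would invoke Theorem~\ref{theorem:appUniversalSet} with parameters $(n,3k,k,\alpha)$ on the universe $V$ to build an $(n,3k,k,\alpha)$-universal set $\mathcal{F}\subseteq 2^V$ of size $O(\binom{3k}{\alpha k}/\binom{k}{\alpha k}\cdot 2^{o(k)}\log n)$. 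Applying Definition~\ref{def:appUniversalSet} to $(X,Y)=(C,L)$ (for which $|X|=k$ and $|Y|=2k=3k-k$) then guarantees the existence of some $F^\star\in\mathcal{F}$ with $|C\cap F^\star|\geq\alpha k$ and $L\cap F^\star=\emptyset$.

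The second step iterates over every $F\in\mathcal{F}$ and applies the polynomial-time algorithm of~\cite{p2packdet}, which, given $G$ and a designated vertex set $F$, computes a maximum-size $P_2$-packing whose every path has its center in $F$ and both endpoints in $V\setminus F$; \alg{Pack1} then returns the largest packing found over all $F\in\mathcal{F}$. Correctness follows from the iteration $F=F^\star$: the $\alpha k$ paths of $\mathcal{P}^\star$ whose centers lie in $C\cap F^\star$ are a feasible solution of this restricted problem, since their endpoints lie in $L\subseteq V\setminus F^\star$, so the subroutine necessarily returns a packing of size at least $\alpha k$ in that iteration. The total running time is dominated by $|\mathcal{F}|$ polynomial-time subroutine calls plus the universal-set construction, each bounded by $O^*(2^{o(k)}\cdot\binom{3k}{\alpha k}/\binom{k}{\alpha k})$ (using $2^{o(3k)}=2^{o(k)}$), yielding the claimed bound.

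The only subtle point — and the place where the main idea really lives — is matching the parameters of the universal set to the combinatorial structure of a $P_2$-packing: the ``$p$'' slot of Definition~\ref{def:appUniversalSet} must hold the $k$ centers, while the ``$k$'' slot must account for the entire vertex support $C\cup L$ of the optimum, of size $3k$, so that the forbidden set $Y$ is allowed to be as large as $2k$ and can cover all endpoints at once. Once this bookkeeping is settled, the rest is a direct plug-in of Theorem~\ref{theorem:appUniversalSet} and the black-box routine of~\cite{p2packdet}, which is tailored exactly to the ``centers in $F$, endpoints off $F$'' variant of $P_2$-Packing that the choice $F=F^\star$ induces.
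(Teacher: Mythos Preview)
Your proposal is correct and follows essentially the same approach as the paper: construct an $(|V|,3k,k,\alpha)$-universal set via Theorem~\ref{theorem:appUniversalSet}, iterate over its members, and for each $F$ invoke the Feng et al.\ subroutine (the paper phrases this as building the bipartite graph $B=(F,V\setminus F,\{\{v,u\}\in E:v\in F,u\notin F\})$ and applying Theorem~\ref{theorem:packBip}, which is exactly your ``centers in $F$, endpoints off $F$'' formulation). The correctness and running-time arguments you give match the paper's almost verbatim, with only cosmetic differences such as returning the largest packing versus the first one of size at least $\alpha k$.
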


Second, in Section \ref{section:packPro2}, we develop a simple procedure based on an approximation algorithm for {\sc $3$-Set $k$-Packing}  by Cygan~\cite{approxPackingbest}, as well as a parameterized algorithm for this problem from \cite{corrmatchpack}, which will be efficient when the value of $\alpha$ is small. For this procedure, we will prove the following result.

\begin{lemma}\label{lemma:packPro2}
Given an instance $(E,{\cal S},k)$ of {\sc $3$-Set $k$-Packing}, as well as an accuracy parameter $0.75\leq\alpha\leq 1$, let $\beta^*=\frac{4\alpha-3+4\epsilon}{1+4\epsilon}$.\footnote{The parameter $\epsilon > 0$ can take any fixed value chosen by the user; for efficiency, the value should be small (close to $0$).} Then, given any $c\geq 1$, \alg{Pack2} solves {\sc $3$-Set $(\alpha,k)$-Packing} in deterministic time $O^*(2^{o(k)}\cdot\displaystyle{\max_{0\leq\beta\leq\beta^*}\left(\frac{(c(3-\beta))^{6-4\beta}}{(2\beta)^{2\beta}\cdot(c(3-\beta)-2\beta)^{6-6\beta}}\right)^k})$.
\end{lemma}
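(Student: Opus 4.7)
The plan is to build \alg{Pack2} by combining Cygan's polynomial-time $(3/4-\epsilon)$-approximation from \cite{approxPackingbest} with a \emph{partial} execution of the representative-sets-based algorithm of \cite{mixing}, halted at an intermediate stage. The design is driven by the identity that $\beta + (3/4-\epsilon)(1-\beta) \geq \alpha$ holds if and only if $\beta \geq \beta^*$: if we can lock in $\beta^* k$ sets that are consistent with some optimum packing and then complete the solution via Cygan's procedure on the residual instance, the combined packing has size at least $\alpha k$.

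More concretely, let ${\cal O}^*$ denote an optimum packing, which we may assume has size $K \geq k$ (otherwise any packing is acceptable). For each integer $p$ with $0 \leq p \leq \lceil \beta^* k \rceil$, I would halt the representative-sets algorithm of \cite{mixing} immediately after it produces a $(K-p)$-representative family $\widehat{\cal S}_p$ of $p$-sized sub-packings of ${\cal S}$. Then, for every candidate ${\cal P} \in \widehat{\cal S}_p$, I remove the $3p$ elements covered by ${\cal P}$ from $(E,{\cal S})$, invoke Cygan's approximation on the residual instance to obtain a packing ${\cal B}$, and output ${\cal P} \cup {\cal B}$ as soon as $|{\cal P} \cup {\cal B}| \geq \alpha k$. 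Since $K$ is unknown in advance, I would try all relevant values in the range $[k, \lceil \alpha k / (3/4-\epsilon) \rceil]$ (the upper bound being inferred from Cygan's approximation applied once to the original instance); this adds only a polynomial overhead.

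For correctness at $p = \lceil \beta^* k \rceil$ (and the correct guess of $K$), pick any sub-packing ${\cal P}^* \subseteq {\cal O}^*$ of size $p$. Definition~\ref{def:repfam} then guarantees some ${\cal P} \in \widehat{\cal S}_p$ disjoint from the $3(K-p)$ elements of ${\cal O}^* \setminus {\cal P}^*$. Hence the residual instance admits ${\cal O}^* \setminus {\cal P}^*$ as a feasible packing of size $K - p \geq (1-\beta^*)k$, so Cygan's approximation returns ${\cal B}$ with $|{\cal B}| \geq (3/4-\epsilon)(1-\beta^*)k$. Substituting the definition of $\beta^*$ gives $|{\cal P}| + |{\cal B}| \geq \alpha k$, as required.

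For the running time, the cost for each $p$ is $|\widehat{\cal S}_p|$ multiplied by the polynomial cost of Cygan's procedure, plus the cost of constructing $\widehat{\cal S}_p$ itself. By the intermediate-stage analysis of the representative-sets algorithm of \cite{mixing}, tuned with the parameter $c \geq 1$, $|\widehat{\cal S}_p|$ at step $p = \beta k$ can be bounded by $2^{o(k)} \cdot \bigl((c(3-\beta))^{6-4\beta}/((2\beta)^{2\beta}(c(3-\beta)-2\beta)^{6-6\beta})\bigr)^k$. Taking the worst case over $\beta \in [0, \beta^*]$ and summing over the $O(k)$ iterations absorbs a polynomial factor into $O^*(\cdot)$ and yields the claimed total. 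The main obstacle I anticipate is rigorously extracting this intermediate-stage size expression: it requires instantiating the recurrences of \cite{mixing} for the uniform matroid at a non-terminal cut-off $p = \beta k$, carefully tracking how $c$ enters through the sparsification step, and reducing the resulting binomial coefficients via Stirling/entropy estimates (with the $2^{o(k)}$ factor absorbing the lower-order overhead) to land on the stated closed form.
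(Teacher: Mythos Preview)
Your high-level strategy---run the representative-sets DP only up to level $\beta^* k$, then complete each surviving partial packing with Cygan's $(3/4-\epsilon)$-approximation, using the identity $\beta^* + (3/4-\epsilon)(1-\beta^*)=\alpha$---is exactly the paper's approach. Two points where the paper is cleaner than your sketch:

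\begin{itemize}
\item You iterate over the unknown optimum size $K\in[k,\lceil\alpha k/(3/4-\epsilon)\rceil]$. This is unnecessary: if an optimum packing has size $\geq k$, any $k$-subset of it is a packing of size exactly $k$, so the paper simply fixes $K=k$ throughout.
\item The ``main obstacle'' you flag---extracting the intermediate-stage size bound from the recurrences of \cite{mixing}---is not confronted head-on. The paper instead invokes a black box (Theorem~\ref{theorem:packRepresent}, credited to \cite{corrmatchpack} together with the tradeoff representative-set computations of \cite{productFam,repesa14}) which already packages the partial DP as: for each $v\in E$, return a collection $\widehat{\bf A}_v$ whose $\family$ is a $3(1-\beta^*)k$-representative of the family of $\beta^*k$-packings whose minimum elements lie in $\{u:u\le v\}$, with both the size of $\widehat{\bf A}_v$ and the construction time bounded by the stated $\max_{0\le\beta\le\beta^*}(\cdots)^k\cdot 2^{o(k)}$. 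The outer loop in \alg{Pack2} is then over $v\in E$ (reflecting the ordered way the DP in \cite{corrmatchpack} processes sets), not over $p$ or $K$.
\end{itemize}

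So your plan is correct and aligned with the paper; the paper just avoids the Stirling/entropy bookkeeping by citing the partial-DP bound as a known implicit result, and it dispenses with the search over $K$.
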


Since {\sc $P_2$-Packing} is a special case of {\sc $3$-Set $k$-Packing}, where one simply associates a 3-set with every simple path on three nodes, we obtain the following corollary.

\begin{corollary}\label{cor:packPro2}
Given an instance $(G,k)$ of {\sc $P_2$-Packing}, as well as an accuracy parameter $0.75\leq\alpha\leq 1$, let $\beta^*=\frac{4\alpha-3+4\epsilon}{1+4\epsilon}$. Then, given any $c\geq 1$, \alg{Pack2} solves {\sc $(\alpha,P_2)$-Packing} in deterministic time $O^*(2^{o(k)}\cdot\displaystyle{\max_{0\leq\beta\leq\beta^*}\left(\frac{(c(3-\beta))^{6-4\beta}}{(2\beta)^{2\beta}\cdot(c(3-\beta)-2\beta)^{6-6\beta}}\right)^k})$.
\end{corollary}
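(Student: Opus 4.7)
The plan is to derive this corollary directly from Lemma \ref{lemma:packPro2} via the standard reduction from $P_2$-Packing to $3$-Set $k$-Packing that is already mentioned twice in the paper.

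First I would describe the reduction. Given an instance $(G=(V,E),k)$ of {\sc $P_2$-Packing}, I construct an instance $(V,{\cal S},k)$ of {\sc $3$-Set $k$-Packing} by letting ${\cal S}$ consist of every $3$-subset $\{u,v,w\}\subseteq V$ such that at least two of the edges among $\{uv,vw,uw\}$ lie in $E$ (i.e.\ $\{u,v,w\}$ carries at least one simple $P_2$ path). For each such triple I also store (in polynomial time) one concrete $P_2$ path on those three nodes. This construction clearly takes polynomial time and produces a family of size at most $|V|^3$.

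Next I would verify that a solution to {\sc $3$-Set $(\alpha,k)$-Packing} on the reduced instance yields a solution to {\sc $(\alpha,P_2)$-Packing} on the original instance. Any $P_2$-packing in $G$ of size $r$ maps naturally to a family of $r$ pairwise disjoint sets in ${\cal S}$, and conversely any family of $r$ pairwise disjoint sets in ${\cal S}$ yields a node-disjoint collection of $r$ simple $P_2$ paths in $G$ by selecting, for each chosen $3$-set, the stored representative path. In particular, $G$ contains at least $k$ disjoint $P_2$ paths if and only if ${\cal S}$ contains at least $k$ pairwise disjoint sets, so the promise of the tradeoff version is preserved exactly. Hence \alg{Pack2}, applied to $(V,{\cal S},k)$ with the same accuracy parameter $\alpha\in[0.75,1]$ and the same constant $c\geq 1$, outputs at least $\alpha k$ disjoint $3$-sets in the positive case, which we translate back to $\alpha k$ node-disjoint $P_2$ paths in $G$.

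Finally I would invoke Lemma \ref{lemma:packPro2} to bound the running time, giving exactly the claimed $O^*\bigl(2^{o(k)}\cdot\max_{0\leq\beta\leq\beta^*}(\frac{(c(3-\beta))^{6-4\beta}}{(2\beta)^{2\beta}\cdot(c(3-\beta)-2\beta)^{6-6\beta}})^k\bigr)$ bound, since the reduction and the post-processing add only polynomial overhead, which is absorbed by the $O^*$ notation. There is essentially no obstacle here: the only subtlety is the observation that a $3$-set may correspond to up to three different $P_2$ paths, but this is harmless because picking a single representative suffices and disjointness at the set level implies node-disjointness at the path level.
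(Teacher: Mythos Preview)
Your proposal is correct and follows essentially the same approach as the paper: the paper derives this corollary in a single sentence by observing that {\sc $P_2$-Packing} is a special case of {\sc $3$-Set $k$-Packing} via associating a $3$-set with every simple path on three nodes, and then invoking Lemma~\ref{lemma:packPro2}. Your write-up simply spells out this reduction in more detail (including the harmless observation about triangles giving rise to a single $3$-set with multiple representative paths), but the argument is the same.
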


Recall that there is polynomial-time $(0.75-\epsilon)$-approximation algorithm for {\sc $P_2$-Packing} \cite{approxPackingbest}. Now, given a value $0.75\leq\alpha\leq 1$, we can simply call the procedure among \alg{Pack1} and \alg{Pack2} that is more efficient. Thus, we immediately obtain an algorithm, \alg{Pack}, for which we have the following result.

\begin{theorem}
Given an instance $(G,k)$ of {\sc $P_2$-Packing}, as well as an accuracy parameter $0.75\leq\alpha\leq 1$, let $\beta^*=\frac{4\alpha-3+4\epsilon}{1+4\epsilon}$. Then, given any $c\geq 1$, \alg{Pack} solves {\sc $(\alpha,P_2)$-Packing} in deterministic time $O^*(2^{o(k)}\cdot\min\left\{\displaystyle{\frac{{3k \choose \alpha k}}{{k\choose \alpha k}}},\displaystyle{\max_{0\leq\beta\leq\beta^*}\left(\frac{(c(3-\beta))^{6-4\beta}}{(2\beta)^{2\beta}\cdot(c(3-\beta)-2\beta)^{6-6\beta}}\right)^k}\right\})$.
\end{theorem}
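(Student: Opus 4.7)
The plan is essentially a compositional argument: the algorithm \alg{Pack} is obtained by running whichever of \alg{Pack1} and \alg{Pack2} has the smaller running-time bound on the given input, and returning its output. The proof then reduces to (i) verifying that both subroutines are applicable under the hypotheses of the theorem and (ii) checking that the declared running time is indeed the minimum of the two bounds established in Lemma~\ref{lemma:packPro1} and Corollary~\ref{cor:packPro2}.

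Concretely, I would proceed as follows. First, since $0.75\le\alpha\le 1$, both Lemma~\ref{lemma:packPro1} (which applies for any $\alpha\le 1$) and Corollary~\ref{cor:packPro2} (which explicitly requires $0.75\le\alpha\le 1$) are in force. Thus for any $c\ge 1$ and for the value $\beta^{*}=(4\alpha-3+4\epsilon)/(1+4\epsilon)$ prescribed in the theorem, the quantities
\[
T_1(n,k,\alpha)\;=\;O^{*}\!\left(2^{o(k)}\cdot\frac{\binom{3k}{\alpha k}}{\binom{k}{\alpha k}}\right),\qquad
T_2(n,k,\alpha,c)\;=\;O^{*}\!\left(2^{o(k)}\cdot\max_{0\le\beta\le\beta^{*}}\!\left(\frac{(c(3-\beta))^{6-4\beta}}{(2\beta)^{2\beta}\cdot(c(3-\beta)-2\beta)^{6-6\beta}}\right)^{k}\right)
\]
are upper bounds for the running times of \alg{Pack1} and \alg{Pack2}, respectively, on the given instance.

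Next, I would define \alg{Pack} as follows: compute both $T_{1}$ and $T_{2}$ (which depend only on $n$, $k$, $\alpha$, $c$ and $\epsilon$, and can be evaluated in time polynomial in the input); then invoke \alg{Pack1} if $T_{1}\le T_{2}$ and \alg{Pack2} otherwise. Correctness follows immediately: in either case, the subroutine called is by hypothesis a valid algorithm for {\sc $(\alpha,P_2)$-Packing} (for \alg{Pack2} this uses the fact, stated just before the theorem, that every instance of {\sc $P_2$-Packing} is naturally an instance of {\sc $3$-Set $k$-Packing}, and any solution returned by \alg{Pack2} on the resulting $3$-set instance corresponds to a $P_2$-packing of the same size). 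The running time is then at most $T_{1}+T_{2}+n^{O(1)}\le 2\min\{T_{1},T_{2}\}+n^{O(1)}$, which is absorbed into the $O^{*}$ notation and yields the claimed bound.

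There is no real obstacle here: the statement is a direct consequence of the two previous results, and the only thing to be careful about is that the comparison of $T_{1}$ and $T_{2}$ is done before either subroutine is launched, so that the total running time is (up to polynomial factors) the minimum of the two and not their sum. Everything else, including the explicit form of $\beta^{*}$ and the role of the user-chosen parameters $c$ and $\epsilon$, is inherited verbatim from Lemma~\ref{lemma:packPro1} and Corollary~\ref{cor:packPro2}.
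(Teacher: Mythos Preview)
Your approach is exactly the paper's: the theorem is stated as an immediate consequence of Lemma~\ref{lemma:packPro1} and Corollary~\ref{cor:packPro2}, obtained by calling whichever of \alg{Pack1} and \alg{Pack2} is faster. One small slip: the inequality $T_1+T_2\le 2\min\{T_1,T_2\}$ is false in general; since you only invoke \emph{one} subroutine after comparing the bounds, the running time is directly $\min\{T_1,T_2\}+n^{O(1)}$, and no such inequality is needed.
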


Concrete figures for the running time of algorithm \alg{Pack} are given in Table~\ref{tab:packRunningTimes} (see Appendix \ref{app:tables}).

\subsection{The Procedure \alg{Pack1}}\label{section:packPro1}

To present our procedure, \alg{Pack1}, we need the following result by Feng {\em et al.}~\cite{p2packdet}, which solves a special case of {\sc $P_2$-Packing} in bipartite graphs in polynomial-time.

\begin{theorem}[\cite{p2packdet}]\label{theorem:packBip}
Given a bipartite graph $G=(L,R,E)$, there is a polynomial-time deterministic algorithm that finds a $P_2$-packing in $G$ of maximum size among all $P_2$-packings in $G$ that only contain paths whose middle vertices belong to $L$.
\end{theorem}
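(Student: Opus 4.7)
My plan is to reduce the problem to computing a maximum matching in a general graph, which is solvable deterministically in polynomial time by Edmonds' blossom algorithm. The intuition is that a $P_2$ with middle vertex $v \in L$ is simply a pairing of two distinct neighbors $r_1, r_2 \in R$ of $v$ through $v$. I will therefore construct an auxiliary graph $G'$ in which each $v \in L$ is replaced by two copies $v_1, v_2$ joined by an edge, and each copy is additionally attached to every $R$-neighbor of $v$; the graph $G'$ clearly has size polynomial in that of $G$.

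Next, I will establish a tight correspondence between $P_2$-packings with middle in $L$ and matchings in $G'$. For the easy direction, given a packing $\mathcal{P}$ of size $k$, I form a matching of size $|L|+k$ by taking the edges $(v_1, r_1), (v_2, r_2)$ for every path $r_1 - v - r_2 \in \mathcal{P}$, together with the ``dummy'' edge $(v_1, v_2)$ for every $v \in L$ that does not appear as a middle of some path in $\mathcal{P}$. Conversely, from any maximum matching $M^*$ of $G'$, I extract a packing by taking the path $r_1 - v - r_2$ for every $v$ whose two copies are matched to distinct $R$-vertices $r_1, r_2$. The algorithm is then: build $G'$, compute a maximum matching $M^*$ via Edmonds' algorithm, and output the extracted packing.

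To close the argument, I will show that the extracted packing has size exactly $|M^*|-|L|$, matching the easy-direction bound. The first observation is that no $v \in L$ can have both $v_1$ and $v_2$ unmatched in $M^*$, for otherwise the edge $(v_1, v_2)$ could be added, contradicting maximality. Letting $a, b, c$ denote the numbers of $v \in L$ whose copies are matched, respectively, via $(v_1,v_2)$, to two distinct $R$-vertices, or to exactly one $R$-vertex, I obtain the identities $a+b+c = |L|$ and $a+2b+c = |M^*|$, whence $b = |M^*|-|L|$. Combined with the easy direction, this forces $b$ to equal the maximum packing size and the algorithm to be optimal.

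The main delicate point is the ``half-matched'' case counted by $c$: such a $v$ consumes an $R$-vertex without producing a $P_2$, and one might worry this wastes capacity that the optimal packing could have used. The counting identity above is exactly what resolves this concern, since it shows that $b$ depends only on $|M^*|$ and $|L|$ and not on $c$, so a maximum matching in $G'$ always yields a packing of optimal size regardless of how the half-matched vertices are distributed.
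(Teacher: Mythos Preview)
The paper does not prove this theorem at all: it is quoted verbatim as a result of Feng et al.\ \cite{p2packdet} and used as a black box inside \alg{Pack1}. So there is no ``paper's own proof'' to compare your attempt against.

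That said, your proposed argument is correct and self-contained. The gadget that replaces each $v\in L$ by two linked copies $v_1,v_2$, each adjacent to $N_G(v)\cap R$, is the standard way to encode the constraint ``$v$ is used by $0$ or $2$ edges'' as an ordinary matching constraint, and your counting identity $b=|M^*|-|L|$ cleanly ties the size of the extracted packing to the size of the maximum matching in $G'$. One small point worth making explicit for a reader: the paths you output are genuinely vertex-disjoint because different $v$'s yield different middles, and all $R$-endpoints are distinct since $M^*$ is a matching; also, every edge of $G'$ is incident to some $v_i$, so the count $|M^*|=a+2b+c$ really accounts for all matching edges. With those remarks added, the proof is complete. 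Conceptually, the same problem can also be phrased as a simple $b$-matching (capacity $2$ on $L$, capacity $1$ on $R$, maximizing the number of saturated $L$-vertices), which is the viewpoint closer to how Feng et al.\ present it; your explicit reduction to Edmonds' algorithm is an equivalent and arguably more elementary route.
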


On a high-level, \alg{Pack1} uses an approximate lopsided universal set to create a set of inputs to the special case in Theorem \ref{theorem:packBip}, returning a large enough $P_2$-packing {\em iff} such a packing is a solution to one of the inputs. Now, we present the pseudocode of \alg{Pack1} (see Algorithm \ref{algorithm:pack1}), and give a more precise description. First, \alg{Pack1} obtains a $(|V|,3k,k,\alpha)$-universal set, $\cal F$ (Step 1). Then, it iterates over every set $F$ in $\cal F$ (Step 2). For each set $F$, it defines a bipartite graph $B$ by letting $L$ be $F$, $R$ be the set of the remaining vertices in $G$, and the set of edges contain every edge in $G$ that connects a node in $L$ with a node in $R$ (Step 3). It uses the algorithm in Theorem \ref{theorem:packBip} to compute a $P_2$-packing in $B$ (Step 4). If the packing contains enough paths (i.e., at least $\alpha k$ paths), \alg{Pack1} returns it (Step 5--6). Finally, if \alg{Pack1} did not find any large enough $P_2$-packing, it returns an empty one (Step 9).

\begin{algorithm}[!ht]
\caption{\alg{Pack1}($G=(V,E),k,\alpha$)}
\begin{algorithmic}[1]\label{algorithm:pack1}
\STATE Compute a $(|V|,3k,k,\alpha)$-universal set, $\cal F$, by using the algorithm in Theorem \ref{theorem:appUniversalSet}.
\FORALL{$F\in{\cal F}$}
	\STATE Define a bipartite graph $B=(F,V\setminus F,\{\{v,u\}\in E: v\in F, u\notin F\})$.
	\STATE Let $\cal P$ be a $P_2$-packing returned by the algorithm in Theorem \ref{theorem:packBip}, using the graph~$B$.
	\IF{$|{\cal P}|\geq\alpha k$}
		\STATE Return $\cal P$.
	\ENDIF
\ENDFOR
\STATE Return an empty $P_2$-packing.
\end{algorithmic}
\end{algorithm}

We now turn to prove the correctness of Lemma \ref{lemma:packPro1}.

\begin{proof}
First, to prove the correctness of \alg{Pack1}, we need to show that if $G$ has a $P_2$-packing of size at least $k$, then \alg{Pack1} returns a $P_2$-packing of size at least $\alpha k$. To this end, suppose that ${\cal P}^*$ is a $P_2$-packing of size $k$. Let $A$ denote the nodes that are middle nodes in the paths in ${\cal P}^*$, and let $B$ denote the other nodes in the paths in ${\cal P}^*$. Then, $|A|=k$ and $|B|=2k$. Therefore, since $\cal F$ is a $(|V|,3k,k,\alpha)$-universal set, there exists $F\in{\cal F}$ such that $|F\cap A|\geq\alpha k$ and $F\cap B=\emptyset$. Therefore, in the iteration the corresponds to $F$, we construct a bipartite graph $B=(L,R,E_B)$ such that at least $\alpha k$ paths in ${\cal P}^*$ have their middle nodes contained in $L$, and all the paths in ${\cal P}^*$, including those that have their middle nodes contained in $L$, have their endpoint nodes contained in $R$. Thus, by its correctness, the algorithm in Theorem \ref{theorem:packBip} returns a $P_2$-packing in $B$, which is also a $P_2$ packing in $G$ (since $B$ is a subgraph of $G$), of at least $\alpha k$ paths, which is then returned by \alg{Pack1}.

For the running time analysis, observe that by Theorem \ref{theorem:appUniversalSet}, \alg{Pack1} computes $\cal F$ (in Step 1) in time $O(\displaystyle{\frac{{3k \choose \alpha k}}{{k\choose \alpha k}}}\cdot 2^{o(k)}n\log n)$, and $|{\cal F}|=O(\displaystyle{\frac{{3k \choose \alpha k}}{{k\choose \alpha k}}}\cdot 2^{o(k)}\log n)$. Thus, since the algorithm in Theorem \ref{theorem:packBip} runs in polynomial-time, we conclude that \alg{Pack1} runs in the desired time.
\end{proof}

\subsection{The Procedure \alg{Pack2}}\label{section:packPro2}

To present our procedure, \alg{Pack2}, we need the following approximation algorithm by Cygan~\cite{approxPackingbest}.

\begin{theorem}[\cite{approxPackingbest}]\label{theorem:packApprox}
There is a deterministic polynomial-time approximation algorithm for {\sc $3$-Set Packing}, \alg{ApproxPack}, with approximation ratio $3/4-\epsilon$.
\end{theorem}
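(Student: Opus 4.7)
The plan is to follow the \emph{bounded pathwidth local search} paradigm. Start with an arbitrary maximal $3$-set packing $\mathcal{P}$ (computable greedily in polynomial time). Maintain $\mathcal{P}$ as ``locally optimal'' with respect to a certain family of swap moves: a swap consists of removing a subfamily $\mathcal{A}\subseteq\mathcal{P}$ and adding a subfamily $\mathcal{B}\subseteq\mathcal{S}\setminus\mathcal{P}$, with $|\mathcal{B}|>|\mathcal{A}|$, such that $(\mathcal{P}\setminus\mathcal{A})\cup\mathcal{B}$ is still a valid packing. The classical Hurkens--Schrijver analysis restricted to swaps of constant \emph{size} only yields a $3/5-\epsilon$ ratio, so the key is to enlarge the swap family while still keeping it enumerable.

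Following Cygan, I would parameterize swaps not by size but by \emph{pathwidth} of their conflict graph. Associate with any candidate swap $(\mathcal{A},\mathcal{B})$ the bipartite ``swap graph'' whose vertices are $\mathcal{A}\cup\mathcal{B}$ and whose edges connect pairs of sets (one from $\mathcal{A}$, one from $\mathcal{B}$) that share an element of the universe. Declare $\mathcal{P}$ locally optimal if no improving swap $(\mathcal{A},\mathcal{B})$ exists whose swap graph has pathwidth at most $w(\epsilon)$, for a constant $w(\epsilon)$ chosen as a function of $\epsilon$.

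The algorithm repeatedly searches for such an improving swap, and stops when none exists. For the search, enumerate all connected bounded-pathwidth patterns on $O(w(\epsilon))$ vertices and, for each, use standard dynamic programming over a nice path decomposition to find an occurrence that forms a valid improving swap; this runs in time $n^{O(w(\epsilon))}$, which is polynomial for fixed $\epsilon$. Since each successful swap strictly increases $|\mathcal{P}|$ and $|\mathcal{P}|\leq n$, the total number of iterations is polynomial.

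The real difficulty, and where essentially all of Cygan's contribution sits, is the structural lemma: if $|\mathcal{P}|<(3/4-\epsilon)|\mathcal{P}^*|$ for some optimum $\mathcal{P}^*$, then the symmetric-difference multigraph of $\mathcal{P}$ and $\mathcal{P}^*$ contains an improving swap whose swap graph has pathwidth at most $w(\epsilon)$. The plan would be to argue via a random separator / path-decomposition sampling argument over the components of the symmetric difference, showing that a random contiguous segment of a suitable ``linearization'' of each component yields, in expectation, a bounded-pathwidth improving piece whenever the density of $\mathcal{P}^*$-sets over $\mathcal{P}$-sets exceeds $4/3+\Theta(\epsilon)$. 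This counting/probabilistic argument is the main obstacle; once it is in hand, local optimality of $\mathcal{P}$ immediately forces $|\mathcal{P}|\geq(3/4-\epsilon)|\mathcal{P}^*|$.
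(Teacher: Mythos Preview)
The paper does not prove this theorem at all: it is quoted verbatim as a black-box result from \cite{approxPackingbest} and used only through the interface ``there exists a polynomial-time $(3/4-\epsilon)$-approximation.'' So there is nothing to compare against on the paper's side; any proof you supply is strictly beyond what the paper does.

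Your outline does track the actual content of \cite{approxPackingbest}: Cygan's algorithm is indeed local search where improving swaps are not bounded in size but in the pathwidth of their conflict graph, and the approximation guarantee follows from a structural lemma asserting that a packing below the $(3/4-\epsilon)$ threshold admits a bounded-pathwidth improving swap. That said, what you have written is a plan, not a proof. Two points stand out. First, the search step is not a simple pattern enumeration: one does not know the swap graph in advance, and Cygan's algorithm instead performs color-coding / dynamic programming over an auxiliary structure to \emph{find} an improving set of bounded pathwidth in $n^{f(\epsilon)}$ time; your ``enumerate all patterns and DP'' description glosses over how the pattern is discovered. Second, and more importantly, you explicitly defer the structural lemma, which is the entire content of the result; the ``random separator / linearization'' sketch you give is not how Cygan's argument proceeds (his proof is a careful deterministic counting argument on the bipartite conflict structure between $\mathcal{P}$ and $\mathcal{P}^*$, exploiting that each optimum set touches at most three current sets), and your sketch as stated does not obviously yield the $3/4$ constant. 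If you intend to actually prove the theorem rather than cite it, that lemma is where all the work lies.
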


Assume an arbitrary order $<$ on $E$. Given a collection of families of sets, ${\bf S}$, let $\family({\bf S})=\{\bigcup{\cal S}: {\cal S}\in{\bf S}\}$ (i.e., we turn every family in ${\bf S}$ into a set). Moreover, given a family of sets, $\cal S$, let $\min({\cal S})=\{\min(S): S\in {\cal S}\}$ (i.e., we take each element that is the smallest element in some set in ${\cal S}$).
We also need the parameterized algorithm for {\sc $3$-Set $k$-Packing} of \cite{corrmatchpack}, for which we have the following result (augmented by the tradeoff-based computation of representative sets of \cite{productFam,repesa14}).

\begin{theorem}[\cite{corrmatchpack}, implicit]\label{theorem:packRepresent}
Let $(E,{\cal S},k)$ be an instance of {\sc $3$-Set $k$-Packing}, and let $0\leq \beta^*\leq 1$, $c\geq 1$ and $v\in E$. There is an algorithm, \alg{ParamPack}, which computes in time $T$ a collection of size at most $T$ of 3-set packings,\footnote{By \cite{corrmatchpack}, the size of $\widehat{\bf A}$ may be significantly smaller than $T$, but this will not be useful in our paper.} $\widehat{\bf A}\subseteq 2^{\cal S}$, such that $\family(\widehat{\bf A})$ $3(1-\beta^*)k$-represents $\cal A$ with respect to $\{u\in E: u > v\}$, where $T=O^*(\displaystyle{\max_{0\leq\beta\leq\beta^*}\left(\frac{(c(3-\beta))^{6-4\beta}}{(2\beta)^{2\beta}\cdot(c(3-\beta)-2\beta)^{6-6\beta}}\right)^k}\cdot 2^{o(k)})$ and ${\cal A}=\{\bigcup{\cal S}': {\cal S}'\subseteq{\cal S}, |{\cal S}'|=\beta^* k$, the sets in ${\cal S}'$ are disjoint$, \min({\cal S}')\subseteq\{u\in E: u\leq v\}\}$.
\end{theorem}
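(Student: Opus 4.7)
The plan is to instantiate the standard representative-sets-based dynamic programming framework for set packing, using the natural order on $E$ to handle the constraint $\min({\cal S}')\subseteq\{u\leq v\}$, and plugging in the tradeoff-based representative-family computation of \cite{productFam,repesa14} at each level. Specifically, I would maintain, for every $i\in\{0,1,\ldots,\beta^*k\}$, a collection $\widehat{\bf A}_i$ of 3-set packings of size exactly $i$ whose member sets all have minimum element in $\{u\in E:u\leq v\}$, and argue by induction that $\family(\widehat{\bf A}_i)$ $(3k-3i)$-represents ${\cal A}_i=\{\bigcup{\cal S}':{\cal S}'\subseteq{\cal S},\ |{\cal S}'|=i,\ \text{sets disjoint},\ \min({\cal S}')\subseteq\{u\leq v\}\}$ with respect to $\{u>v\}$. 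The final output is $\widehat{\bf A}=\widehat{\bf A}_{\beta^*k}$, whose representation parameter $3k-3\beta^*k=3(1-\beta^*)k$ is exactly the value required.

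At each level $i\geq 1$, I would first form an intermediate collection by extending every $\cal P\in\widehat{\bf A}_{i-1}$ with every $S\in{\cal S}$ that is disjoint from $\bigcup{\cal P}$, satisfies $\min(S)\leq v$, and has $\min(S)$ strictly exceeding every minimum already present in $\cal P$ (the strict comparison prevents each packing from being generated more than once). A routine swapping argument then shows that this uncompressed extension $(3k-3i)$-represents ${\cal A}_i$: any test set $Y\subseteq\{u>v\}$ of size at most $3k-3i$, combined with the extra 3-set added at step $i$, yields a test set of size at most $3k-3(i-1)$ at the previous level, so the inductive hypothesis supplies a surviving packing. Compressing the intermediate collection via the tradeoff-based representative-family algorithm of \cite{productFam,repesa14}, applied with set size $p=3i$, representation slack $q=3k-3i$, ambient parameter $p+q=3k$, and the user-chosen $c\geq 1$, preserves the representation property and produces $\widehat{\bf A}_i$; correctness then follows from Definition~\ref{def:repfam}.

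For the running time, the representative-family computation at level $i=\beta k$ both produces a collection of size, and runs in time, $2^{o(k)}$ times the generic tradeoff bound of \cite{productFam,repesa14}; substituting $p=3i$ and $q=3(1-\beta)k$ and simplifying should yield precisely $\bigl((c(3-\beta))^{6-4\beta}/((2\beta)^{2\beta}(c(3-\beta)-2\beta)^{6-6\beta})\bigr)^k$, with only a polynomial overhead for iterating over $\cal S$ during extension. Summing over $i=1,\ldots,\beta^*k$ is dominated by its maximum term, which gives the stated $\max_{0\leq\beta\leq\beta^*}$ bound on $T$; the size bound $|\widehat{\bf A}|\leq T$ is automatic, since $\widehat{\bf A}$ is the output of a single final-level compression.

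The main obstacle is the exponent-tracking algebra that turns the generic tradeoff-based bound of \cite{productFam,repesa14} into the concrete expression stated in the theorem. The calculation is routine but finicky, and is exactly the one carried out in the running-time analysis of the {\sc $3$-Set $k$-Packing} algorithm of \cite{corrmatchpack}; what the present statement adds is making that implicit analysis explicit and threading the order-based restriction $\min({\cal S}')\subseteq\{u\leq v\}$ through the dynamic programming in a way that produces a representative family with respect to $\{u>v\}$ rather than all of $E$.
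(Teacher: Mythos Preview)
The paper does not prove this theorem; it is quoted as a black-box result implicit in \cite{corrmatchpack}, with the running-time expression coming from plugging in the tradeoff-based representative-family computation of \cite{productFam,repesa14}. Your sketch is therefore a reconstruction of that external argument, and it does capture the right high-level strategy: level-by-level dynamic programming over partial packings, compressed at each level by a representative-family computation, with the order on $E$ used to encode the constraint $\min({\cal S}')\subseteq\{u\leq v\}$.

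There are, however, two technical slips in the sketch as written. First, the inductive hypothesis cannot be ``representation with respect to $\{u>v\}$'' at intermediate levels. In your swapping step you combine the test set $Y$ with the newly added 3-set $S$ to form the test set at level $i-1$; but $\min(S)\leq v$, so $S\cup Y\not\subseteq\{u>v\}$, and the inductive hypothesis as you stated it does not apply. The fix is easy: maintain representation with respect to all of $E$ at every level (which is what the algorithms of \cite{productFam,repesa14} deliver anyway), and only at the end observe that this implies the weaker representation with respect to $\{u>v\}$ required by the theorem.

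Second, the ordering device (``$\min(S)$ strictly exceeding every minimum already present in $\cal P$'') does not interact cleanly with the swapping argument. After you replace $X'$ by a representative $\widehat{X}'=\bigcup\widehat{\cal P}'$, nothing guarantees that every minimum in $\widehat{\cal P}'$ lies below $\min(S)$; hence $\widehat{\cal P}'\cup\{S\}$ need not belong to your ordered extension, and so the uncompressed extension need not $\,(3k-3i)$-represent ${\cal A}_i$. Two standard remedies are available: either drop the ordering altogether (duplicates are harmless, since the subsequent compression controls the size, and the pre-compression blow-up is only a factor of $|{\cal S}|$), or refine the DP to be indexed by both $i$ and the current largest minimum, computing a separate representative family for each cell. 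With either fix the remainder of your argument, and the running-time bookkeeping, go through as you describe.
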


On a high-level, \alg{Pack2} calls \alg{ParamPack}, and attempts to complete the returned partial solutions by calling \alg{ApproxPack}. Now, we present the pseudocode of \alg{Pack2} (see Algorithm \ref{algorithm:pack2}), and give a more precise description. First, for all $v\in E$, \alg{Pack2} obtains a collection $\widehat{\bf A}_v$ such that $\family(\widehat{\bf A}_v)$ $3(1-\beta^*)k$-represents ${\cal A}$ (as defined in Theorem \ref{theorem:packRepresent}), where $\beta^* = \frac{4\alpha-3+4\epsilon}{1+4\epsilon}$ (Steps 1--4). It lets the collection $\widehat{\bf A}$ contain each family that belongs to a collection $\widehat{\bf A}_v$ for all $v\in E$ (Step 5). Then, it iterate over every family ${\cal P}'$ in $\widehat{\bf A}$ (Step 6). For each family ${\cal P}'$, it defines a family of 3-sets ${\cal B}$ that includes all the 3-sets in $\cal S$ that do not contain elements from $\bigcup{\cal P}'$ (Step 7). It uses the algorithm in Theorem \ref{theorem:packApprox} to compute a $3$-set packing, ${\cal P}$, in ${\cal B}$ (Step 8). If the combined packing, ${\cal P}'\cup{\cal P}$ contains enough 3-sets (i.e., at least $\alpha k$ 3-sets), \alg{Pack2} returns it (Steps 9--10). Finally, if \alg{Pack2} did not find any large enough 3-set packing, it returns an empty one (Step 13).

\begin{algorithm}[!ht]
\caption{\alg{Pack2}($E,{\cal S},k,\alpha$)}
\begin{algorithmic}[1]\label{algorithm:pack2}
\STATE Let $\beta^*\Leftarrow \frac{4\alpha-3+4\epsilon}{1+4\epsilon}$.
\FORALL{$v\in E$}
\STATE Compute a collection $\widehat{\bf A}_v$ such that $\family(\widehat{\bf A}_v)$ $3(1-\beta^*)k$-represents ${\cal A}$, which is defined in Theorem \ref{theorem:packRepresent}, by using the algorithm in this theorem.
\ENDFOR
\STATE Let $\widehat{\bf A}\Leftarrow \bigcup_{v\in E}\widehat{\bf A}_v$.
\FORALL{${\cal P}'\in\widehat{\bf A}$}
	\STATE Define ${\cal B}=\{S\in{\cal S}: S\cap (\bigcup{\cal P}')=\emptyset\}$.
	\STATE Let $\cal P$ be a 3-set packing returned by the algorithm in Theorem \ref{theorem:packApprox}, using the input $(E,{\cal B})$.
	\IF{$|{\cal P}'\cup{\cal P}|\geq\alpha k$}
		\STATE Return ${\cal P}'\cup{\cal P}$.
	\ENDIF
\ENDFOR
\STATE Return an empty 3-set packing.
\end{algorithmic}
\end{algorithm}

We now turn to prove the correctness of Lemma \ref{lemma:packPro2}.

\begin{proof}
Clearly, \alg{Pack2} returns only 3-set packings, since ${\cal P}'$ and $\cal P$ are 3-set packings (by Theorems  \ref{theorem:packApprox} and \ref{theorem:packRepresent}), and Step 7 ensures that ${\cal P}'\cup{\cal P}$ is also a 3-set packing. Thus, to prove the correctness of \alg{Pack2}, we need to show that if $\cal S$ has a 3-set packing of size at least $k$, then \alg{Pack2} returns a 3-set packing of size at least $\alpha k$. To this end, suppose that $\widetilde{\cal P}$ is a 3-set packing of size $k$. Observe that there exists $v\in E$, as well as a subset ${{\cal P}^*}'$ of $\beta^* k$ 3-sets from $\widetilde{\cal P}$, such that $\min({{\cal P}^*}')\subseteq\{u\in E: u\leq v\}$ and $\bigcup {\cal P}^*\subseteq \{u\in E: u > v\}$, where ${\cal P}^*=\widetilde{\cal P}\setminus {{\cal P}^*}'$. Then, $|\bigcup{\cal P}^*|=3(1-\beta^*)k$. Therefore, by Theorem \ref{theorem:packRepresent}, there exists ${\cal P}'$ in $\widehat{\bf A}_v\subseteq\widehat{\bf A}$ such that $(\bigcup{\cal P}')\cap(\bigcup{\cal P}^*)=\emptyset$. Consider the iteration to corresponds to ${\cal P}'$. Then, by Theorem \ref{theorem:packApprox}, \alg{Pack2} computes a 3-set packing $\cal P$ of size at least $(\frac{3}{4}-\epsilon)|\bigcup{\cal P}^*|=(\frac{3}{4}-\epsilon)(1-\beta^*)k$. Thus, \alg{Pack2} returns a 3-set packing of size $|{\cal P}'\cup{\cal P}|=|{\cal P}'|+|{\cal P}|\geq\beta^*k + (\frac{3}{4}-\epsilon)(1-\beta^*)k = [\frac{3}{4}-\epsilon+(\frac{1}{4}+\epsilon)\beta^*]k = [\frac{3}{4}-\epsilon+(\frac{1}{4}+\epsilon)\frac{4\alpha-3+4\epsilon}{1+4\epsilon}]k = \alpha k$.

For the running time analysis, observe that by Theorem \ref{theorem:appUniversalSet}, \alg{Pack2} computes $\widehat{\bf A}$ (in Steps 2--5) in time $T=O^*(\displaystyle{\left(\min_{1\leq c}\max_{0\leq\beta\leq\beta^*}\frac{(c(3-\beta))^{6-4\beta}}{(2\beta)^{2\beta}\cdot(c(3-\beta)-2\beta)^{6-6\beta}}\right)^k}\cdot 2^{o(k)})$, and $|\widehat{\bf A}|\leq T$. Thus, since the algorithm in Theorem \ref{theorem:packApprox} runs in polynomial-time, we conclude that \alg{Pack2} runs in the desired time.
\end{proof}

\section{An Algorithm for {\sc $3$-Set $k$-Packing}}\label{sec:Packing}

In this section, we develop a parameterized algorithm that finds approximate solutions for {\sc $3$-Set $k$-Packing}. We will develop two ``similar'' procedures, \alg{SetPack1} and \alg{SPRand1}, which will be efficient when the value of $\alpha$ is large. For these procedures, we will prove the following result.

\begin{lemma}\label{lemma:setPackPro1}
Given an instance $(E,{\cal S},k)$ of {\sc $3$-Set $k$-Packing}, as well as an accuracy parameter $0.75\leq\alpha\leq 1$, \alg{SetPack1} and \alg{SPRand1} solve {\sc $3$-Set $(\alpha,k)$-Packing} in deterministic time $O^*(8.097^{(1.5\alpha-0.5)k})$ and in randomized time $O^*(3.3432^{(1.5\alpha-0.5)k})$, respectively.
\end{lemma}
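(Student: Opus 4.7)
My plan rests on the following simple observation: for any $3$-set packing $P'$ of size $m$ in $(E,\mathcal{S})$, we have $|\bigcup P'|=3m$, and because any optimum packing $P^*$ consists of pairwise disjoint $3$-sets, each element of $E$ belongs to at most one set of $P^*$. Hence at most $3m$ sets of $P^*$ intersect $\bigcup P'$, so the residual instance $(E',\mathcal{S}')$ with $E'=E\setminus\bigcup P'$ and $\mathcal{S}'=\{S\in\mathcal{S}:S\subseteq E'\}$ contains a $3$-set packing of size at least $|P^*|-3m$.

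Using this, I would design \alg{SetPack1} as follows. Set $m=\lceil(1-\alpha)k/2\rceil$. First, run Cygan's $(3/4-\epsilon)$-approximation (Theorem~\ref{theorem:packApprox}) to obtain a packing $P_1$; if $|P_1|<m$, the approximation ratio forces the optimum to be smaller than $k$, and I return $P_1$ and stop. Otherwise I pick an arbitrary subfamily $P_1'\subseteq P_1$ with $|P_1'|=m$, form the residual instance $(E',\mathcal{S}')$, and invoke the best known deterministic exact algorithm for {\sc $3$-Set $k$-Packing} (with running time $O^*(8.097^{t})$) on the instance $(E',\mathcal{S}',t)$ for $t:=\alpha k-m$. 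If it returns a packing $P_2$, output $P_1'\cup P_2$; otherwise output $P_1$. The procedure \alg{SPRand1} is identical except that in the last step I will instead invoke the randomized $O^*(3.3432^{t})$-time algorithm of \cite{bjo10}.

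For correctness, assume the instance admits a $3$-set packing $P^*$ of size at least $k$. Cygan's guarantee then yields $|P_1|\geq(3/4-\epsilon)k\geq m$, since $\alpha\geq 3/4$ forces $m\leq k/8$, so the algorithm reaches the exact sub-call with a well-defined $P_1'$. Applying the observation to $P_1'$ and $P^*$ shows that the residual instance contains a $3$-set packing of size at least $k-3m$; since the choice of $m$ gives $k-3m=(3\alpha-1)k/2=t$, a packing of size $t$ exists in $(E',\mathcal{S}')$ and is returned by the exact algorithm, so the overall output has size $m+(\alpha k-m)=\alpha k$, as required. The running time is dominated by the final exact call at parameter $t=(1.5\alpha-0.5)k$, yielding the two claimed bounds.

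Structurally there is no substantive technical obstacle: the entire argument is a balancing trick in which $m=(1-\alpha)k/2$ is the unique value that equates the committed size $m$ with the reduced residual parameter $k-3m$. The only points I expect to need any care are the rounding when $(1-\alpha)k/2$ is not an integer, and the case when the optimum is smaller than $k$, where the definition of {\sc $3$-Set $(\alpha,k)$-Packing} allows any packing (in particular $P_1$) as a valid output.
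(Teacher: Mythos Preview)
Your proposal is correct and takes essentially the same approach as the paper: pick an initial packing of size $m=(1-\alpha)k/2$, observe that at most $3m$ sets of an optimal packing are destroyed, and solve the residual instance exactly with parameter $(1.5\alpha-0.5)k$. The only cosmetic difference is that the paper obtains the initial packing $\mathcal{P}'$ by a bare greedy loop (add any disjoint $3$-set until $|\mathcal{P}'|=m$; if none exists, the optimum is below $k$), whereas you route through Cygan's approximation, which is unnecessary here since any maximal packing already has size at least $\mathrm{OPT}/3\geq k/3\geq m$.
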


Recall that there is polynomial-time $(0.75-\epsilon)$-approximation algorithm for {\sc $3$-Set $k$-Packing} \cite{approxPackingbest}. Now, given a value $0.75\leq\alpha\leq 1$, we can simply call the procedure among \alg{SetPack1} (\alg{SPRand1}) and \alg{Pack2} (from Section \ref{section:p2pack}) that is more efficient. Thus, we immediately obtain algorithms, \alg{SetPack} and \alg{SPRand}, for which we have the following result.

\begin{theorem}
Given an instance $(E,{\cal S},k)$ of {\sc $3$-Set $k$-Packing}, and an accuracy parameter $0.75\leq\alpha\leq 1$, \alg{SetPack} and \alg{SPRand} solve {\sc $3$-Set $(\alpha,k)$-Packing} in deterministic time $O^*(\min\left\{\displaystyle{8.097^{(1.5\alpha-0.5)k}},2^{o(k)}\cdot\displaystyle{\max_{0\leq\beta\leq\beta^*}\left(\frac{(c(3-\beta))^{6-4\beta}}{(2\beta)^{2\beta}\cdot(c(3-\beta)-2\beta)^{6-6\beta}}\right)^k}\right\})$ and in randomized time $O^*(\min\left\{\displaystyle{3.3432^{(1.5\alpha-0.5)k}},2^{o(k)}\cdot\displaystyle{\max_{0\leq\beta\leq\beta^*}\left(\frac{(c(3-\beta))^{6-4\beta}}{(2\beta)^{2\beta}\cdot(c(3-\beta)-2\beta)^{6-6\beta}}\right)^k}\right\})$, respectively, for any $c\geq 1$, where $\beta^*=\frac{4\alpha-3+4\epsilon}{1+4\epsilon}$.
\end{theorem}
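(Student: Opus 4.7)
The plan is to deduce the theorem as an immediate combination of Lemma~\ref{lemma:setPackPro1} and Lemma~\ref{lemma:packPro2}, exactly as foreshadowed in the paragraph preceding the theorem. Concretely, I would define \alg{SetPack} (resp.\ \alg{SPRand}) to first compare the two running-time expressions appearing inside the $\min$ as functions of the given input parameters $\alpha$, $k$, $c$, $\epsilon$ (a purely arithmetic comparison, since these bounds depend only on $\alpha,k,\epsilon,c$, all known at the start), and then invoke whichever of \alg{SetPack1} (resp.\ \alg{SPRand1}) and \alg{Pack2} is the faster one on the given instance $(E,{\cal S},k)$.

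For correctness, I would argue as follows. If the input instance admits a 3-set packing of size at least $k$, then by Lemma~\ref{lemma:setPackPro1} the procedure \alg{SetPack1} (resp.\ \alg{SPRand1}) returns a 3-set packing of size at least $\alpha k$, and by Lemma~\ref{lemma:packPro2} the procedure \alg{Pack2} (invoked on the same instance with the same parameter $\alpha$) also returns a 3-set packing of size at least $\alpha k$. Hence whichever of the two sub-procedures is selected, its output is a valid solution to {\sc $3$-Set $(\alpha,k)$-Packing}. If the optimum is less than $k$, any output is a valid solution by the definition of the tradeoff problem, so correctness is immediate.

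For the running-time bound, the selected procedure runs within the smaller of the two bounds guaranteed by Lemmas~\ref{lemma:setPackPro1} and~\ref{lemma:packPro2}, which is precisely the $\min\{\cdot,\cdot\}$ expression stated in the theorem (the deterministic variant in the case of \alg{SetPack}, the randomized one in the case of \alg{SPRand}). The overhead for selecting between the two procedures is a single comparison of closed-form expressions and is polynomial, so it is absorbed by the $O^*(\cdot)$ notation.

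There is no substantial obstacle here: the theorem is a packaging result, and the only minor point to verify is that \alg{Pack2}, although presented in Section~\ref{section:p2pack} in the context of the $P_2$-Packing tradeoff, is in fact a procedure for {\sc $3$-Set $(\alpha,k)$-Packing} (as explicit in the statement of Lemma~\ref{lemma:packPro2}); this is why the same procedure can be reused here without modification. Consequently the proof reduces to citing the two lemmas and invoking the take-the-best principle.
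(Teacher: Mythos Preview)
Your proposal is correct and follows exactly the paper's approach: the paper does not give a formal proof of this theorem but simply remarks, in the paragraph immediately preceding it, that one calls whichever of \alg{SetPack1}/\alg{SPRand1} and \alg{Pack2} is more efficient, so the theorem follows directly from Lemmas~\ref{lemma:setPackPro1} and~\ref{lemma:packPro2}. Your only addition is to spell out the correctness and running-time arguments explicitly, which is fine.
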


Concrete figures for the running time of algorithms \alg{SetPack} and \alg{SPRand} are given in Tables~\ref{tab:setPackRunningTimes} and \ref{tab:setPackRandRunningTimes} (see Appendix \ref{app:tables}), respectively.

We next turn to present \alg{SetPack1} and \alg{SPRand1}. To this end, we need the following results, given in \cite{mixing} and \cite{bjo10}.

\begin{theorem}[\cite{mixing}]\label{theorem:setPackParam}
There is a deterministic algorithm for {\sc $3$-Set $k$-Packing} that runs in time $O^*(8.097^k)$.
\end{theorem}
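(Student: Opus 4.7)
The plan is to sketch the representative-family dynamic programming scheme that underlies the result of \cite{mixing}, given that this theorem is imported as a black box and the emphasis therefore lies on the high-level structure rather than on the fine-grained constants. I would iterate $k$ stages, maintaining at stage $i$ a subfamily $\widehat{\mathcal{A}}_i \subseteq \mathcal{A}_i$, where $\mathcal{A}_i$ is the family of all $i$-packings formed from $\mathcal{S}$, such that $\family(\widehat{\mathcal{A}}_i)$ $(3k-3i)$-represents $\family(\mathcal{A}_i)$ with respect to $E$. Starting from $\widehat{\mathcal{A}}_0 = \{\emptyset\}$, at stage $i+1$ I would join each packing in $\widehat{\mathcal{A}}_i$ with every $3$-set in $\mathcal{S}$ disjoint from the packing and then compute a representative subfamily of the resulting collection. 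By transitivity of the representation property, $\widehat{\mathcal{A}}_k$ is non-empty if and only if a $k$-packing exists in $\mathcal{S}$, and any element of $\widehat{\mathcal{A}}_k$ is then a valid output.

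The next step would be to bound both $|\widehat{\mathcal{A}}_i|$ and the per-stage running time using the efficient representative set computation of \cite{representative,productFam,repesa14}, whose cost depends on a tradeoff parameter $c$ analogous to the one appearing in Theorem \ref{theorem:packRepresent}. Each stage contributes a multiplicative factor to the total running time, and summing over all $k$ stages while optimizing the tradeoff parameter yields a bound of the form $O^*(C^k)$ for some constant $C$.

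The main obstacle is obtaining the concrete constant $8.097$. A naive application of representative families gives per-stage size bounds of the form ${3k \choose 3i}$, so that the product $|\widehat{\mathcal{A}}_i| \cdot |\mathcal{S}|$ at a single stage can already exceed the target $8.097^k$ for some $i$, and the time to compute each representative subfamily is still larger. Reaching $8.097^k$ therefore requires combining the above representative-family scheme with the additional techniques of \cite{mixing} (in particular, interleaving representative set computation with iterative compression and an ad hoc divide-and-conquer over a carefully chosen partition of the instance), and then tuning the tradeoff parameter at each stage; this fine-grained analysis is the content of \cite{mixing} and is invoked here only as a black box in the construction of \alg{SetPack1} and \alg{SPRand1}.
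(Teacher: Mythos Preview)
The paper does not prove this theorem at all: it is stated with the citation \cite{mixing} and used purely as a black box inside \alg{SetPack1}. There is therefore no ``paper's own proof'' to compare against, and your recognition of this fact is correct. Your decision to defer the actual argument to \cite{mixing} matches what the paper does.

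Your high-level sketch of a representative-family dynamic program is a reasonable description of the general paradigm, and your diagnosis that the naive $\binom{3k}{3i}$-bounded scheme does not reach $8.097^k$ without further ideas is accurate. One caveat: the specific additional ingredients you name (``iterative compression and an ad hoc divide-and-conquer over a carefully chosen partition'') do not match what \cite{mixing} actually does. That paper obtains the constant by \emph{mixing} color-coding-related techniques---in particular combining representative sets with a divide-and-color / balanced-hashing step and a tuned tradeoff parameter---rather than via iterative compression. Since you are explicitly treating the result as a black box this is a minor inaccuracy in the surrounding commentary, not a gap in the argument, but you should either remove the speculative description of the internals of \cite{mixing} or correct it.
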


\begin{theorem}[\cite{bjo10}]\label{theorem:randSetPackParam}
There is a randomized algorithm for {\sc $3$-Set $k$-Packing} that runs in time $O^*(3.3432^k)$.
\end{theorem}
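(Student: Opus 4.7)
Since this statement is cited from \cite{bjo10} rather than proved from within the paper, my plan is to sketch the algebraic \emph{monomial detection} paradigm that underlies that line of randomized parameterized algorithms. The starting point is to recast 3-set packing as a polynomial identity test. To each element $e \in E$ I would associate a formal variable $y_e$, and to each $S = \{a,b,c\} \in \mathcal{S}$ the degree-3 monomial $m_S = y_a y_b y_c$. A 3-set packing of size $k$ exists if and only if the polynomial $P(y) = (\sum_{S \in \mathcal{S}} m_S)^k$ has a multilinear monomial of degree $3k$ with non-zero coefficient; non-multilinear contributions are killed by working in the quotient ring that enforces $y_e^2 = 0$ for every $e \in E$, so detecting a $k$-packing reduces to testing whether $P$ is the zero polynomial in that quotient.

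The second step is to turn this characterisation into a fast randomized test. I would substitute each $y_e$ by an independent random element of an algebra built on an extension field $\mathbb{F}_{2^m}$ with $m = \Theta(\log n)$ and augmented to satisfy the square-killing relations. By a Schwartz--Zippel-type argument in this algebra, the evaluation is non-zero with probability at least $1 - O(n^{-c})$ whenever a $k$-packing exists, and is identically zero otherwise, so any non-zero evaluation is a one-sided witness. A naive evaluation would still cost about $O^*(4^k)$, for instance by iterating over subsets of a colour-coded universe of size $2k$. To reach the claimed $O^*(3.3432^k)$ bound the plan is to follow the balanced-splitting / determinant-sum strategy of \cite{bjo10}: split the target packing into two halves of size roughly $k/2$, enumerate compatible partial evaluations on each half, and merge the two sides by a fast set-convolution tailored to characteristic~$2$, in which characteristic-$2$ cancellations eliminate most non-multilinear contributions during the merge itself rather than only at the end.

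The hard part will be the tight analysis of that merge. The base $3.3432$ arises as the optimum of a recurrence balancing the number of partial packings enumerated on one side against the cost of the convolution that combines them, and pushing this analysis past the trivial $O^*(4^k)$ barrier is the core technical step of \cite{bjo10}; this is where I would expect to spend the bulk of the effort in reconstructing the proof. Once the decision version of the algorithm is in hand, turning it into one that actually outputs a $k$-packing is routine by self-reducibility: iterate through the sets $S \in \mathcal{S}$, test whether $(E \setminus S,\, \{S' \in \mathcal{S} : S' \cap S = \emptyset\},\, k-1)$ still admits a packing, and commit to $S$ if so. This incurs an extra $O(|\mathcal{S}|)$ factor absorbed into $O^*$, and $O(\log(1/\delta))$ independent repetitions of the randomized identity test amplify the one-sided success probability to any desired $1-\delta$.
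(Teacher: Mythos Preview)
The paper does not prove this theorem; it is stated purely as a citation to \cite{bjo10} and then invoked as a black box inside \alg{SPRand1}. There is nothing to compare your argument against, because the paper's ``proof'' is the citation itself.

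On the substance of your sketch: the high-level framing via multilinear monomial detection and characteristic-$2$ cancellation is the right paradigm for the cited result, but your account of where the base $3.3432$ comes from is off. It is not obtained by a meet-in-the-middle ``balanced-splitting / determinant-sum'' merge of two halves of size $k/2$; that description is closer to other algorithms in the Bj\"orklund line (e.g., for Hamiltonicity) than to the one cited here. The result in \cite{bjo10} is the \emph{narrow sieves} technique: one assigns labels to elements from a label set that is strictly smaller than the number of elements being packed, builds a generating polynomial over an extension of $\mathrm{GF}(2)$ indexed by labelled candidate packings, and shows via a fixed-point-free involution that every term reusing an element cancels with a partner in characteristic~$2$. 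The running time is then governed by the number of label assignments one must sum over, and optimising the size of the label set is what produces the stated base. If you intend to reconstruct the argument, the key technical step to chase is the pairing-up lemma for bad terms, not a convolution over split halves.
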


The pseudocode of \alg{SetPack1} is given below (see Algorithm \ref{algorithm:setPack1}). \alg{SPRand1} is identical to alg{SetPack1}, except that it calls the algorithm in Theorem \ref{theorem:randSetPackParam} rather than the algorithm in Theorem \ref{theorem:setPackParam}. On a high-level, \alg{SetPack1} creates an arbitrary small 3-set packing, and then attempts to complete it to a solution by calling the algorithm in Theorem \ref{theorem:setPackParam}. More precisely, \alg{SetPack1} first defines an empty 3-set packing ${\cal P}'$ (Step 1). Then, it iteratively attempts to add $\frac{(1-\alpha)k}{2}$ disjoint 3-sets from ${\cal S}$ to ${\cal P}'$ (Steps 2--8). To this end, at each iteration $i$, \alg{SetPack1} inserts (in Step 4) to ${\cal P}'$ an arbitrary 3-set $S$ from ${\cal S}$ that does not contain elements from any 3-set already in ${\cal P}'$. If such a set $S$ does not exist, \alg{SetPack1} simply returns an empty 3-set packing (Step 6). After \alg{SetPack1} finishes adding 3-sets to ${\cal P}'$, it lets $\widetilde{\cal S}$ contain the 3-sets in ${\cal S}$ that do not contain elements from any 3-set in ${\cal P}'$ (Step 9). Then, it attempts to find a 3-set packing ${\cal P}$ of size $(1.5\alpha-0.5)k$ in $\widetilde{\cal S}$ by calling the algorithm in Theorem \ref{theorem:setPackParam} (Step 10). Finally, it returns ${\cal P}'\cup{\cal P}$ (Step 11).

\begin{algorithm}[!ht]
\caption{\alg{SetPack1}($E,{\cal S},k,\alpha$)}
\begin{algorithmic}[1]\label{algorithm:setPack1}
\STATE ${\cal P}'\Leftarrow\emptyset$.
\FOR{$i=1,2,\ldots,\frac{(1-\alpha)k}{2}$}
	\IF{there exists $S\in{\cal S}$ such that $S\cap(\bigcup{\cal P}')=\emptyset$}
		\STATE Add $S$ to ${\cal P}'$.
	\ELSE
		\STATE Return an empty 3-set packing.
	\ENDIF
\ENDFOR
\STATE $\widetilde{\cal S}\Leftarrow \{S\in{\cal S}: S\cap(\bigcup{\cal P}')=\emptyset\}$.
\STATE Let ${\cal P}$ be a 3-set packing returned by the algorithm in Theorem \ref{theorem:setPackParam}, using the input $(E,\widetilde{\cal S},(1.5\alpha-0.5)k)$.
\STATE Return ${\cal P}'\cup{\cal P}$.
\end{algorithmic}
\end{algorithm}

We now prove the correctness of Lemma \ref{lemma:setPackPro1}.

\begin{proof}
Clearly, \alg{SetPack1} (\alg{SPRand1}) returns only 3-set packings, since by the pseudocode and Theorem \ref{theorem:setPackParam} (Theorem \ref{theorem:randSetPackParam}) ${\cal P}'$ and $\cal P$ are 3-set packings, and Step 9 ensures that ${\cal P}'\cup{\cal P}$ is also a 3-set packing. Thus, to prove the correctness of \alg{SetPack1} (\alg{SPRand1}), we need to show that if $\cal S$ has a 3-set packing of size at least $k$, then \alg{SetPack1} (\alg{SPRand1}) returns a 3-set packing of size at least $\alpha k$. To this end, suppose that $\widetilde{\cal P}$ is a 3-set packing of size $k$. Every 3-set in $\cal S$ can have a non-empty intersection with at most three 3-sets in $\widetilde{\cal P}$. Therefore, at each iteration $i$ (of Step 2), there exist at least $k-3i$ 3-sets in $\widetilde{\cal P}$ that do not contain elements that are contained in any 3-set in ${\cal P}'$. Thus, Step 6 is not executed. Moreover, after the last iteration of Step 2, $|{\cal P}'|=\frac{(1-\alpha)k}{2}$ and denoting ${{\cal P}^*}'=\{S\in\widetilde{\cal P}: S\cap(\bigcup{\cal P}')\neq\emptyset\}$, we have that $|{{\cal P}^*}'|\leq \frac{3(1-\alpha)k}{2}$. Denoting ${\cal P}^*=\widetilde{\cal P}\setminus{{\cal P}^*}'$, we have that $|{{\cal P}^*}|\geq k-\frac{3(1-\alpha)k}{2} = (1.5\alpha-0.5)k$. Observe that ${\cal P}^*\subseteq \widetilde{\cal S}$, where $\widetilde{\cal S}$ is defined in Step 9. Therefore, by Thereom \ref{theorem:setPackParam} (\ref{theorem:randSetPackParam}), \alg{SetPack1} (\alg{SPRand1}) obtains (in Step 10) a 3-set packing ${\cal P}$ of size $(1.5\alpha-0.5)k$. Thus, \alg{SetPack1} (\alg{SPRand1}) returns a 3-set packing of size $|{\cal P}'\cup{\cal P}| = |{\cal P}'|+|{\cal P}| = \frac{(1-\alpha)k}{2}+(1.5\alpha-0.5)k = \alpha k$.

For the running time analysis, observe that Steps 1--9 and 11 can be performed in deterministic polynomial-time. Moreover, by Theorem \ref{theorem:setPackParam} (\ref{theorem:randSetPackParam}), Step 10 can be performed in deterministic time $O^*(8.097^{(1.5\alpha-0.5)k})$ (randomized time $O^*(3.3432^{(1.5\alpha-0.5)k})$). Thus, \alg{SetPack1} (\alg{SPRand1}) runs in the desired time.
\end{proof}

\bibliography{References}

\newpage
\appendix

\section{Proof of Lemma \ref{lemma:set2}}\label{app:set2}

A family $\cal A$ of functions from $E$ to $\{1,2,\ldots,k^2\}$ is {\em $k$-perfect} if for every set $S\subseteq E$ of size $k$, there exists $f\in {\cal A}$ such that $f$ is injective when restricted to $S$. We start by obtaining such a family $\cal A$ of size $O(k^{O(1)}\log n)$ in time $O(k^{O(1)}n\log n)$ by using the construction by Alon {\em et al.}~\cite{colorcoding}.

For a set $S\subseteq E$ and a function $f\in{\cal A}$, define $f(S) = \{f(s) : s\in S\}$. Similarly, for a set $S\subseteq \{1,2,\ldots,k^2\}$, define $f^{-1}(S) = \{s\in S : f(s)\in S\}$. For a family $\cal S$ of subsets of $E$, define $f({\cal S})=\{f(S): S\in{\cal S}\}$. Similarly, for a family $\cal S$ of subsets of $\{1,2,\ldots,k^2\}$, define $f^{-1}({\cal S})=\{f^{-1}(S): S\in{\cal S}\}$.

Now, we use the given algorithm to contruct an $(k^2,k,p,\alpha)$-universal set, $\widehat{\cal F}$, of size $\zeta(k^2,k,p,\alpha)$ in time $\tau(k^2,k,p,\alpha)$ (with respect to the universe $\{1,2,\ldots,k^2\}$). Then, we let the desired $(n,k,p,\alpha)$-universal set be ${\cal F}=\displaystyle{\bigcup_{f\in{\cal A}}f^{-1}(\widehat{\cal F})}$.

Observe that $|{\cal F}|\leq |\widehat{\cal F}|\cdot k^{O(1)}\log n\leq O(\zeta(k^2,k,p,\alpha)\cdot k^{O(1)}\log n)$. Moreover, the computation of $\widehat{\cal F}$ is performed in time $O(\tau(k^2,k,p,\alpha)$, and then, the computation of $\cal F$ is performed in time $O(\zeta(k^2,k,p,\alpha)\cdot k^{O(1)}n\log n)$. Thus, we computed a family $\cal F$ of the desired size, $\zeta'$, in the desired time $\tau'$. It remains to show that $\widehat{\cal F}$ is an $(n,k,p,\alpha)$-universal set. Consider some sets $X\subseteq E$ of size $p$ and $Y\subseteq E\setminus X$ of size $k-p$. Since $\cal A$ is $k$-perfect, there is a function $f\in{\cal A}$ that is injective when restricted to $X\cup Y$. In particular, $f(X)\cap f(Y)=\emptyset$, $|f(X)|=p$ and $|f(Y)|=k-p$. Thus, since $\widehat{\cal F}$ is a $(k^2,k,p,\alpha)$-universal set, there exists $\widehat{F}\in\widehat{\cal F}$ such that $|\widehat{F}\cap f(X)|\geq \alpha p$ and $\widehat{F}\cap f(Y)=\emptyset$. Therefore, $|f^{-1}(\widehat{F})\cap X|\geq\alpha p$ and $f^{-1}(\widehat{F})\cap Y=\emptyset$. Since $f^{-1}(\widehat{F})\in{\cal F}$, we conclude that the lemma is correct.\qed

\section{Proof of Lemma \ref{lemma:set3}}\label{app:set3}

Let us denote $E=\{1,2,\ldots,n\}$. Correspondingly, let ${\cal P}_t$ denote the collection of all consecutive partitions of $E$ with exactly $t$ parts that are not necessarily non-empty. Clearly, $|{\cal P}_t|={n+t-1 \choose t-1}=2^{O(t\log n)}$. We will construct an $(n,st,p,\alpha)$-universal set, which is also an $(n,k,p,\alpha)$-universal set (since $st\geq k$).

For every $\widehat{p}\in\{0,1,\ldots,s\}$, we obtain an $(n,k,p,\alpha)$-universal set, $\widehat{\cal F}_{\widehat{p}}$, by using the given algorithm. Given a family ${\cal S}\subseteq 2^E$ and a set $S'\subseteq E$, define ${\cal S}\sqcap S' = \{S\cap S': S\in{\cal S}\}$. Moreover, given families ${\cal S},{\cal S}'\subseteq 2^E$, define ${\cal S}\circ{\cal S}' = \{S\cup S': S\in{\cal S}, S'\in{\cal S}'\}$. Now, we compute our $(n,st,p,\alpha)$-universal set, $\cal F$, by using the following formula.

\[\displaystyle{{\cal F} = \bigcup_{\begin{array}{l}
\{E_1,\ldots,E_t\}\in{\cal P}_t\\
(p_1,\ldots,p_t)\in{\cal Z}^p_{s,t}
\end{array}} (\widehat{\cal F}_{p_1}\sqcap E_1)\circ(\widehat{\cal F}_{p_2}\sqcap E_2)\circ\ldots\circ(\widehat{\cal F}_{p_t}\sqcap E_t)}.\]

By its definition, it immediately follows that $|{\cal F}|$ is within the desired bound. Moreover, the computation of the families $\widehat{\cal F}_{\widehat{p}}$ is done in time $O(\sum_{\widehat{p}=1}^s\tau(n,s,\widehat{p},\alpha))$. Afterwards, the computation of $\cal F$ is done in time $O(\zeta'(n,k,p,\alpha)\cdot n^{O(1)})$. Therefore, $\tau'(n,k,p,\alpha)$ is also within the desired bound. It remains to show that $\widehat{\cal F}$ is an $(n,st,p,\alpha)$-universal set. Consider some sets $X\subseteq E$ of size $p$ and $Y\subseteq E\setminus X$ of size $st-p$. There exists a consecutive partition $\{E_1,\ldots,E_t\}\in{\cal P}_t$ of $E$ such that for every $i\in\{1,\ldots,t\}$, we have that $|(X\cup Y)\cap E_i|=s$. For every $i\in\{1,\ldots,t\}$, let $p_i=|X\cap E_i|$. Since for every $i\in\{1,\ldots,t\}$, $\widehat{\cal F}_{p_i}$ is an $(n,s,p_i,\alpha)$-universal set, there exists $F_i\in\widehat{\cal F}_{p_i}$ such that $|F_i\cap (X\cap E_i)|\geq \alpha p_i$ and $F_i\cap (Y\cap E_i)=\emptyset$. Denote $F=(F_1\cap E_1)\cup(F_2\cap E_2)\cup\ldots\cup(F_t\cap E_t)$. Then, $|F\cap X|=\sum_{i=1}^t|F_i\cap (X\cap E_i)|\geq\sum_{i=1}^t\alpha p_i=\alpha p$, and $F\cap Y = \bigcup_{i=1}^t(F_i\cap(Y\cap E_i))=\emptyset$. Since $F\in{\cal F}$, we conclude that the lemma is correct.\qed

\section{An Algorithm for {\sc $3$-Dimensional $k$-Matching}}\label{app:match}

To obtain a parameterized algorithm that finds approximate solutions for {\sc $3$D $k$-Matching} (which is a special case of {\sc $3$-Set $k$-Packing}), we follow the arguments given in Sections \ref{section:packPro2} and \ref{sec:Packing}, replacing the best known algorithm for {\sc $3$-Set $k$-Packing} (that are used in these sections) by the best known algorithms for {\sc $3$D $k$-Matching}.

More precisely, in Section \ref{section:packPro2}, we now assume an arbitrary order $<$ on $E=E_1\cup E_2\cup E_3$ such that the elements in $E_1$ are the smallest (i.e., for all $v\in E_1$ and $u\in E_2\cup E_3$, we have that $v<u$). Instead of Theorem \ref{theorem:packRepresent}, we have the following result of \cite{fsttcs13} (augmented by the tradeoff-based computation of representative sets of \cite{productFam,repesa14}). 

\begin{theorem}[\cite{fsttcs13}, implicit]\label{theorem:matchRepresent}
Let $(E_1,E_2,E_3,{\cal S},k)$ be an instance of {\sc $3$D $k$-Matching}, and let $0\leq \beta^*\leq 1$, $c\geq 1$ and $v\in E_1$. There is an algorithm, \alg{ParamMatch}, which computes in time $T$ a collection of size at most $T$  of 3-set packings, $\widehat{\bf A}\subseteq 2^{\cal S}$, such that $\family(\widehat{\bf A})$ $2(1-\beta^*)k$-represents $\cal A$ with respect to $E_2\cup E_3$, where $T=O^*(\displaystyle{\max_{0\leq\beta\leq\beta^*}\left(\frac{c^{4-2\beta}}{\beta^{2\beta}\cdot(c-\beta)^{4-4\beta}}\right)^k}\cdot 2^{o(k)})$ and ${\cal A}=\{\bigcup{\cal S}': {\cal S}'\subseteq{\cal S}, |{\cal S}'|=\beta^* k$, the sets in ${\cal S}'$ are disjoint$, \min({\cal S}')\subseteq\{u\in E: u\leq v\}\}$.
\end{theorem}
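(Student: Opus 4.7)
The plan is to adapt the dynamic programming-based parameterized algorithm for \textsc{$3$D $k$-Matching} of~\cite{fsttcs13}, combining it with the tradeoff-based computation of representative sets of~\cite{productFam,repesa14}, and to exploit the fact that $E_1$-elements are the smallest in the order $<$. Specifically, \alg{ParamMatch} restricts attention to the subfamily of 3-sets in $\cal S$ whose $E_1$-element is at most $v$, and iteratively builds a collection $\widehat{\bf A}_i$ of size-$i$ 3-set packings drawn from this restricted family, for $i = 1, 2, \ldots, \beta^* k$. At each step, every packing in $\widehat{\bf A}_{i-1}$ is extended by appending a single eligible 3-set in all possible ways, and the result is then trimmed down to a representative subfamily using~\cite{productFam,repesa14}. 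The output is $\widehat{\bf A} = \widehat{\bf A}_{\beta^* k}$, whose members are 3-set packings of size $\beta^* k$ satisfying $\min({\cal S}') \subseteq \{u \leq v\}$ by construction.

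The core observation that yields the sharper exponent (relative to the analogous Theorem~\ref{theorem:packRepresent} for \textsc{$3$-Set $k$-Packing}) is that representation need only be maintained with respect to $E_2 \cup E_3$. Indeed, if a packing ${\cal P}'$ in $\widehat{\bf A}$ is to be completed to a full 3D-matching $\widetilde{\cal P}$ of size $k$, the completion ${\cal P}^* = \widetilde{\cal P} \setminus {\cal P}'$ consists of $(1 - \beta^*)k$ 3-sets, each using exactly one element from each of $E_1, E_2, E_3$. Because the $E_1$-elements of ${\cal P}'$ all lie in $\{u \leq v\}$, we may choose ${\cal P}^*$ so that its $E_1$-elements lie in $\{u > v\}$, making $E_1$-disjointness automatic; the only nontrivial disjointness constraint then involves the $2(1 - \beta^*)k$ elements of ${\cal P}^*$ in $E_2 \cup E_3$. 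The trimming step can thus be invoked with respect to $E_2 \cup E_3$ alone, with a prefix footprint of size $2\beta k$ at stage $\beta k$ and a tail of size at most $2(1-\beta^*)k$; plugging these parameters into the size-time formula of~\cite{productFam,repesa14} and taking the maximum over $\beta \in [0, \beta^*]$ yields exactly the claimed expression for $T$.

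The main obstacle will be justifying carefully that the representative-family invariant is preserved when trimming uses only the $E_2 \cup E_3$-footprint of each packing. Concretely, one must show that for any target packing $\widetilde{\cal P}$ of size $k$ whose $\beta^* k$ smallest-$E_1$-element sets form an element of $\cal A$, the DP maintains at every level $i$ a representative ${\cal P}'_i \in \widehat{\bf A}_i$ whose $E_2 \cup E_3$-footprint is disjoint from that of $\widetilde{\cal P} \setminus {\cal P}'_i$; this is precisely the invariant implicitly preserved by the DP of~\cite{fsttcs13}, and what requires explicit verification is that the trimming transition from ${\cal P}'_{i-1}$ to ${\cal P}'_i$ remains sound under this restricted notion of representation. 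Once this invariant is established, the conclusion that $\family(\widehat{\bf A})$ $2(1-\beta^*)k$-represents $\cal A$ with respect to $E_2 \cup E_3$ follows immediately from the definition of representation applied at level $i = \beta^* k$, and the bounds on $|\widehat{\bf A}|$ and on the running time follow directly from the size-time guarantee of~\cite{productFam,repesa14}.
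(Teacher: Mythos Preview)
The paper does not prove this theorem; it is stated as a result implicit in~\cite{fsttcs13}, augmented by the tradeoff-based representative-set computation of~\cite{productFam,repesa14}, and is used as a black box. Your sketch is a faithful reconstruction of the method behind that cited result: the dynamic program of~\cite{fsttcs13} builds partial packings level by level, trimming at each level via representative families, and the key observation---that representation need only be maintained over the $E_2\cup E_3$-footprint because the $E_1$-ordering makes $E_1$-disjointness automatic in the intended application---is precisely what yields the sharper exponent compared to Theorem~\ref{theorem:packRepresent}.

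One minor imprecision: at an intermediate stage $\beta k$ (with $\beta<\beta^*$) the tail against which the trimming must preserve representation has size $2(1-\beta)k$, not $2(1-\beta^*)k$, since the partial packing must still be extendable by $(\beta^*-\beta)k$ further sets before being tested against an external $Y$ of size $2(1-\beta^*)k$. This is exactly why the running time involves a maximum over $\beta\in[0,\beta^*]$ rather than a single evaluation at $\beta^*$, and with this correction your derivation of $T$ goes through.
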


Then, as shown in Section \ref{section:packPro2} (we need to use Theorem \ref{theorem:matchRepresent} rather than Theorem \ref{theorem:packRepresent}), we obtain a procedure \alg{Match2}, for which we have the following result.

\begin{lemma}\label{lemma:matchPro2}
Given an instance $(E_1,E_2,E_3,{\cal S},k)$ of {\sc $3$D $k$-Matching}, as well as an accuracy parameter $0.75\leq\alpha\leq 1$, let $\beta^*=\frac{4\alpha-3+4\epsilon}{1+4\epsilon}$. Then, for any $c\geq 1$, \alg{Pack2} solves {\sc $3$-Set $(\alpha,k)$-Packing} in deterministic time $O^*(2^{o(k)}\cdot\displaystyle{\max_{0\leq\beta\leq\beta^*}\left(\frac{c^{4-2\beta}}{\beta^{2\beta}\cdot(c-\beta)^{4-4\beta}}\right)^k})$.
\end{lemma}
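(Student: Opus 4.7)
My plan is to mirror the proof of Lemma~\ref{lemma:packPro2}, substituting Theorem~\ref{theorem:matchRepresent} for Theorem~\ref{theorem:packRepresent} and exploiting the tripartite structure of $E = E_1 \cup E_2 \cup E_3$. I would first define \alg{Match2} as the procedure obtained from \alg{Pack2} by fixing the order $<$ on $E$ so that $E_1$-elements precede all $E_2$- and $E_3$-elements, restricting the outer loop to $v \in E_1$, and invoking \alg{ParamMatch} in place of \alg{ParamPack}. Since every $S \in {\cal S}$ contains one element from each $E_i$, the minimum of $S$ lies in $E_1$, so $\min({\cal S}') \subseteq \{u : u \leq v\}$ is equivalent to bounding the $E_1$-coordinates of the sets in ${\cal S}'$.

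For correctness, suppose $\widetilde{\cal P}$ is a 3D-matching of size $k$. I would order the sets of $\widetilde{\cal P}$ by their $E_1$-coordinates and let $v \in E_1$ be the $E_1$-coordinate of the $\beta^*k$-th set; let ${{\cal P}^*}'$ be these first $\beta^*k$ sets and ${\cal P}^* = \widetilde{\cal P} \setminus {{\cal P}^*}'$. Then $\bigcup{{\cal P}^*}' \in {\cal A}$, and $Y := (\bigcup{\cal P}^*) \cap (E_2 \cup E_3)$ is a subset of $(E_2 \cup E_3) \setminus \bigcup{{\cal P}^*}'$ of size exactly $2(1-\beta^*)k$. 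By the $2(1-\beta^*)k$-representation guarantee of Theorem~\ref{theorem:matchRepresent}, there exists ${\cal P}' \in \widehat{\bf A}_v \subseteq \widehat{\bf A}$ with $(\bigcup{\cal P}') \cap Y = \emptyset$. The key observation is that this implies \emph{full} disjointness of ${\cal P}'$ from ${\cal P}^*$: the $E_2 \cup E_3$-parts are disjoint by the representation property, and the $E_1$-parts are disjoint because every $E_1$-element of ${\cal P}'$ is the minimum of a set counted in ${\cal A}$ (hence $\leq v$), whereas every $E_1$-element of ${\cal P}^*$ is $> v$ by construction. Consequently ${\cal P}^* \subseteq {\cal B}$ in the iteration corresponding to ${\cal P}'$, and Theorem~\ref{theorem:packApprox} (applied to ${\cal B}$, which is itself a 3D-matching instance) yields a packing ${\cal P}$ of size at least $(3/4 - \epsilon)(1-\beta^*)k$. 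The identical arithmetic as in Lemma~\ref{lemma:packPro2} then gives $|{\cal P}' \cup {\cal P}| \geq \beta^*k + (3/4-\epsilon)(1-\beta^*)k = \alpha k$ after plugging in $\beta^* = (4\alpha - 3 + 4\epsilon)/(1+4\epsilon)$.

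For the running time, the computation of $\widehat{\bf A}$ dominates: by Theorem~\ref{theorem:matchRepresent} each call to \alg{ParamMatch} runs in time $T = O^*(\max_{0\leq\beta\leq\beta^*}(c^{4-2\beta}/(\beta^{2\beta}(c-\beta)^{4-4\beta}))^k \cdot 2^{o(k)})$ and produces at most $T$ packings, and $|E_1| \leq |E|$ is polynomial. Since \alg{ApproxPack} is polynomial, the total cost matches the bound in the statement. The step I expect to require the most care is the $E_1$-disjointness argument: Theorem~\ref{theorem:matchRepresent} only guarantees disjointness on $E_2 \cup E_3$, so the full disjointness needed to invoke the approximation algorithm must be recovered from the structural interplay between the chosen ordering on $E$ and the constraint $\min({\cal S}') \subseteq \{u \leq v\}$ built into ${\cal A}$. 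Once this interaction is spelled out, the rest of the proof is a direct transcription of the proof of Lemma~\ref{lemma:packPro2}.
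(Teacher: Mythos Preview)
Your proposal is correct and follows precisely the approach the paper intends: the paper's own ``proof'' of Lemma~\ref{lemma:matchPro2} consists of the single sentence ``as shown in Section~\ref{section:packPro2} (we need to use Theorem~\ref{theorem:matchRepresent} rather than Theorem~\ref{theorem:packRepresent}),'' and your write-up is exactly that transcription. In fact you supply more care than the paper does, since you explicitly work out the $E_1$-disjointness argument (that $\family(\widehat{\bf A}_v)\subseteq{\cal A}$ forces the $E_1$-coordinates of ${\cal P}'$ to lie in $\{u\le v\}$, while those of ${\cal P}^*$ lie in $\{u>v\}$), which the paper leaves entirely implicit.
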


In Section \ref{sec:Packing}, instead of Theorems \ref{theorem:setPackParam} and \ref{theorem:randSetPackParam}, we have the following theorems. 

\begin{theorem}[\cite{mixing}]\label{theorem:matchParam}
There is a deterministic algorithm for {\sc $3$D $k$-Matching} that runs in time $O^*(2.5961^{2k})$.
\end{theorem}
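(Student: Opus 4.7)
The theorem is cited from \cite{mixing}, so in the paper proper one simply invokes that reference; my task here is to sketch the approach I would take if I had to rederive or verify a bound of this form from scratch. The plan is a representative-sets-based dynamic programming algorithm, in the spirit of the computation underlying Theorem \ref{theorem:matchRepresent} but now driven to completion rather than stopped at a $\beta^* k$-size partial solution.

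Concretely, I would process the triples of $\mathcal{S}$ in some fixed order, maintaining for each $p \in \{0,1,\ldots,k\}$ a collection $\mathcal{A}_p$ of partial 3D matchings of size $p$. Each step would extend $\mathcal{A}_p$ by every triple compatible with a matching in $\mathcal{A}_p$ to obtain a candidate $\mathcal{A}_{p+1}'$, and then replace $\mathcal{A}_{p+1}'$ by a $(k-p-1)$-representative subfamily $\mathcal{A}_{p+1}$. The defining property of representative families (Definition \ref{def:repfam}) guarantees that if $\mathcal{S}$ contains a 3D matching of size $k$, then $\mathcal{A}_k$ is non-empty, and the algorithm simply returns any witness in it.

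The matroid that naturally controls the representative computation is the partition matroid on $E_1 \cup E_2 \cup E_3$ in which each part has rank $k$. However, because every triple uses exactly one element from each $E_i$, the constraint on $E_1$ is already forced by the count of triples selected, so the effective matroid behaves like a uniform matroid of rank $2k$ over $E_2 \cup E_3$. I would then apply the tradeoff-based representative-sets computation of \cite{productFam,repesa14} — the same black box that yields Theorem \ref{theorem:matchRepresent} — to each DP step, with a parameter $c \geq 1$ that trades representative family size against per-step computation time.

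The main obstacle is pinning down the precise base $2.5961$. Summing the per-step running times (dominated by the worst $p$) produces an expression of the form $O^*\bigl(\max_\beta (c^{4-2\beta}/(\beta^{2\beta}(c-\beta)^{4-4\beta}))^k\bigr)$, i.e.\ the same formula that appears in Theorem \ref{theorem:matchRepresent} with $\beta^*=1$; optimizing first over $\beta$ for fixed $c$ and then over $c$ is a numerical calculus exercise whose output is the claimed constant. Verifying that this optimum is indeed attained, that the DP can be scheduled so that the worst $p$ is the binding one, and that no lower-order terms push the base up past $2.5961$, is where the bulk of the technical effort would go — together with showing that the matroid reduction from the $3$D partition structure to an effective rank-$2k$ uniform matroid incurs no hidden cost.
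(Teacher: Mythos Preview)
The paper does not prove this theorem at all; it is stated purely as a black-box citation of \cite{mixing} (with the parenthetical remark earlier in the introduction that the algorithm of \cite{mixing} is ``based on \cite{fsttcs13}''). You correctly identify this in your first sentence. Your sketch of a representative-sets dynamic program over $E_2\cup E_3$ with an effective rank-$2k$ uniform matroid, driven by the tradeoff computation of \cite{productFam,repesa14}, is indeed the approach underlying \cite{fsttcs13,mixing}, and running Theorem~\ref{theorem:matchRepresent} with $\beta^*=1$ and then optimizing $\min_{c\geq 1}\max_{0\leq\beta\leq 1}$ is exactly how the constant $2.5961^2\approx 6.7397$ arises. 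So your proposal is aligned with the cited source, but there is nothing in the present paper to compare it against beyond the bare citation.
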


\begin{theorem}[\cite{bjo10}]\label{theorem:randMatchParam}
There is a randomized algorithm for {\sc $3$D $k$-Matching} that runs in time $O^*(2^k)$.
\end{theorem}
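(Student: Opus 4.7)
The plan is to use Björklund's algebraic ``narrow sieves'' framework, adapted to exploit the tripartite structure of 3D-Matching. The core idea is to encode the existence of a $k$-matching as a polynomial identity over a finite field of characteristic two, and then to evaluate the identity efficiently by sieving only over subsets related to one of the three universes. Because squaring is a ring homomorphism in characteristic two, any monomial in which some variable appears an even number of times cancels, which is the mechanism that suppresses contributions of non-matchings (triple-collections that reuse an element).

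First, I would associate with every element $e \in E_1 \cup E_2 \cup E_3$ a uniformly random label $x_e \in \mathbb{F}_{2^r}$ for some $r = O(\log n)$, and with every triple $S = (a_1,a_2,a_3) \in \mathcal{S}$ the weight $w_S = x_{a_1} x_{a_2} x_{a_3}$. Define the generating polynomial $P = \bigl(\sum_{S \in \mathcal{S}} w_S\bigr)^k$. After reducing modulo characteristic two, the only surviving monomials in $P$ correspond to collections of $k$ triples whose labels multiply to a multilinear product in the $x_e$'s, i.e.\ to valid $3$D-matchings of size $k$. Each genuine matching contributes a nonzero multilinear monomial, so the target coefficient is nonzero with high probability iff a matching exists.

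Second, I would apply the sieve identity to isolate the relevant part of $P$. Since every $3$D-matching of size $k$ uses exactly $k$ elements of $E_1$, one can write the target quantity as a signed sum $\sum_{T \subseteq U}(-1)^{|T|} P_T$ indexed by subsets $T$ of a canonical $k$-element family $U$, where each $P_T$ restricts the triples to those whose $E_1$-coordinate lies outside $T$. Each $P_T$ is then expanded combinatorially: using linearity and the tripartite structure, its value reduces to evaluating a bipartite, matching-style quantity over $E_2 \cup E_3$, which is polynomial-time computable. The Schwartz--Zippel lemma over $\mathbb{F}_{2^r}$ then upgrades the zero-test into a Monte Carlo decision procedure with one-sided error tunable to any inverse-polynomial bound by enlarging $r$.

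The main obstacle is to justify that the sieve can be restricted to $2^k$ terms rather than the naive $\binom{|E_1|}{k}$ that would arise from enumerating all possible anchor sets in $E_1$. This requires a careful algebraic cancellation argument: one must show that, thanks to characteristic-two arithmetic and symmetry of $P$, contributions from all but a canonically indexed family of $2^k$ subsets telescope or vanish in pairs, leaving only the $2^k$ genuine inclusion--exclusion terms. Once that identity is in place, the running time is immediate: $2^k$ outer iterations, each of polynomial cost, giving the claimed $O^*(2^k)$ bound.
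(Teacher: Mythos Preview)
The paper does not prove this theorem; it is quoted from \cite{bjo10} and used purely as a black box, so there is nothing in the paper to compare your argument against beyond the citation itself.

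Your sketch has the right provenance but two concrete gaps. First, the polynomial $P=\bigl(\sum_{S\in\mathcal S}w_S\bigr)^k$ does not isolate matchings in characteristic two: already for $k=2$ one has $P=\sum_S w_S^{2}$, which is generically a nonzero field element whenever $\mathcal S\neq\emptyset$, regardless of whether two disjoint triples exist. For general $k$ the surviving monomials are governed by Lucas' theorem on multinomial coefficients modulo $2$ and have no direct relation to multilinearity in the $x_e$. The construction in \cite{bjo10} introduces extra indeterminates (per-triple variables and an auxiliary label set of size $k$) precisely so that every non-matching contribution is paired off by a fixed-point-free involution; your $P$ lacks the structure to support such an involution.

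Second, the sieve set is misidentified. There is no canonical $k$-element subset $U\subseteq E_1$ to sieve over, because one does not know in advance which elements of $E_1$ a hypothetical matching uses, and no characteristic-two identity collapses the $\binom{|E_1|}{k}$ candidate anchor sets to $2^k$ of them; your closing paragraph essentially concedes this. In \cite{bjo10} the $2^k$ factor comes from inclusion--exclusion (equivalently, degree-$k$ multilinear-monomial detection) over an \emph{auxiliary} index set $[k]$ introduced by the construction, not over $E_1$. Distinctness in the first two coordinates $E_1,E_2$ is enforced separately---via a determinant/permanent computation that is polynomial-time over $\mathrm{GF}(2^r)$---and only the remaining coordinate is handled by the $2^k$-term sieve. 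Without these two ingredients the outline does not reach a proof.
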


Then, as shown in Section \ref{sec:Packing} (we need to use Theorem \ref{theorem:matchParam} and Theorem \ref{theorem:randMatchParam} rather than Theorem \ref{theorem:setPackParam} and Theorem \ref{theorem:randSetPackParam}, respectively), we obtain procedures \alg{Match1} and \alg{MatchRand1}, for which we have the following result.

\begin{lemma}\label{lemma:matchPro1}
Given an instance $(E_1,E_2,E_3,{\cal S},k)$ of {\sc $3$D $k$-Matching}, as well as an accuracy parameter $0.75\leq\alpha\leq 1$, \alg{Match1} and \alg{MatchRand1} solve {\sc $3$D $(\alpha,k)$-Matching} in deterministic time $O^*(2.5961^{(3\alpha-1)k})$ and in randomized time $O^*(2^{(1.5\alpha-0.5)k})$, respectively
\end{lemma}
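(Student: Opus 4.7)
The plan is to mimic the proof of Lemma~\ref{lemma:setPackPro1} almost verbatim, simply swapping in the best known parameterized routines for {\sc $3$D $k$-Matching} (Theorems~\ref{theorem:matchParam} and \ref{theorem:randMatchParam}) in place of those for {\sc $3$-Set $k$-Packing}. Accordingly, \alg{Match1} (resp.\ \alg{MatchRand1}) is the natural analog of \alg{SetPack1} (resp.\ \alg{SPRand1}): first greedily build a disjoint 3-set packing ${\cal P}'\subseteq {\cal S}$ of size exactly $\lfloor (1-\alpha)k/2\rfloor$ by repeatedly picking any set disjoint from $\bigcup {\cal P}'$ (returning an empty packing if none exists), and then invoke the black-box algorithm of Theorem~\ref{theorem:matchParam} (resp.\ Theorem~\ref{theorem:randMatchParam}) on the restricted instance $(E_1,E_2,E_3,\widetilde{\cal S},(1.5\alpha-0.5)k)$, where $\widetilde{\cal S}=\{S\in{\cal S}: S\cap (\bigcup{\cal P}')=\emptyset\}$, to obtain ${\cal P}$; finally, return ${\cal P}'\cup {\cal P}$. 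Note that $\widetilde{\cal S}$ is a legitimate {\sc $3$D $k$-Matching} instance over the reduced universes $E_i\setminus\bigcup {\cal P}'$, so the black-box routines are applicable.

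For correctness, I would first observe that the output is always a 3D matching, since ${\cal P}'$ is disjoint by construction, ${\cal P}$ is disjoint by the guarantee of the invoked algorithm, and the definition of $\widetilde{\cal S}$ ensures no element of $\bigcup {\cal P}'$ appears in any set of ${\cal P}$. Next, assuming a true 3D matching $\widetilde{\cal P}$ of size $k$ exists, I would argue exactly as in the proof of Lemma~\ref{lemma:setPackPro1}: any single 3-set can intersect at most three sets of $\widetilde{\cal P}$, so at the start of iteration $i$ at least $k-3(i-1)>0$ sets of $\widetilde{\cal P}$ remain compatible with ${\cal P}'$, and therefore the greedy step always succeeds. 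After the greedy phase, at most $3\cdot\frac{(1-\alpha)k}{2}$ sets of $\widetilde{\cal P}$ touch $\bigcup{\cal P}'$, leaving at least $k-\frac{3(1-\alpha)k}{2}=(1.5\alpha-0.5)k$ sets of $\widetilde{\cal P}$ entirely inside $\widetilde{\cal S}$. Hence the call to the black-box algorithm (correctly) finds some ${\cal P}$ of size $(1.5\alpha-0.5)k$, and the returned packing has total size $\frac{(1-\alpha)k}{2}+(1.5\alpha-0.5)k=\alpha k$, as required.

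For the running time, the greedy phase together with the construction of $\widetilde{\cal S}$ is polynomial. The dominant cost is the single black-box call on parameter $(1.5\alpha-0.5)k$. By Theorem~\ref{theorem:matchParam}, the deterministic cost is $O^*\!\bigl(2.5961^{2(1.5\alpha-0.5)k}\bigr)=O^*\!\bigl(2.5961^{(3\alpha-1)k}\bigr)$; by Theorem~\ref{theorem:randMatchParam}, the randomized cost is $O^*\!\bigl(2^{(1.5\alpha-0.5)k}\bigr)$. Both match the stated bounds.

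The only non-routine point worth emphasizing is the verification that the greedy phase cannot fail whenever an optimal solution of size $k$ exists; this is the key combinatorial observation (each chosen 3-set eliminates at most three sets of $\widetilde{\cal P}$) and ensures that we do not prematurely return the empty packing. Everything else is bookkeeping, and since the argument is structurally identical to that of Lemma~\ref{lemma:setPackPro1}, I expect no genuine obstacle—only care in matching the exponents $(3\alpha-1)k=2(1.5\alpha-0.5)k$ in the deterministic case.
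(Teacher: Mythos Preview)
Your proposal is correct and follows essentially the same approach as the paper: the paper explicitly states that \alg{Match1} and \alg{MatchRand1} are obtained by rerunning the argument of Section~\ref{sec:Packing} with Theorems~\ref{theorem:matchParam} and~\ref{theorem:randMatchParam} substituted for Theorems~\ref{theorem:setPackParam} and~\ref{theorem:randSetPackParam}, which is precisely what you do. Your additional remarks (that $\widetilde{\cal S}$ remains a valid {\sc $3$D $k$-Matching} instance, and the exponent identity $2(1.5\alpha-0.5)k=(3\alpha-1)k$) are exactly the small bookkeeping points needed to make the transfer go through.
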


Recall that there is polynomial-time $(0.75-\epsilon)$-approximation algorithm for {\sc $3$D $k$-Matching} \cite{approxPackingbest}. Now, given a value $0.75\leq\alpha\leq 1$, we can simply call the procedure among \alg{Match1} (\alg{MatchRand1}) and \alg{Match2} that is more efficient. Thus, we immediately obtain algorithms, \alg{Match} and \alg{MatchRand}, for which we have the following result.

\begin{theorem}
Given an instance $(E_1,E_2,E_3,{\cal S},k)$ of {\sc $3$D $k$-Matching}, and an accuracy parameter $0.75<\alpha\leq 1$, \alg{Match} and \alg{MatchRand} solve {\sc $3$D $(\alpha,k)$-Matching} in deterministic time $O^*(\min\left\{\displaystyle{2.5961^{(3\alpha-1)k}},2^{o(k)}\cdot\displaystyle{\max_{0\leq\beta\leq\beta^*}\left(\frac{c^{4-2\beta}}{\beta^{2\beta}\cdot(c-\beta)^{4-4\beta}}\right)^k}\right\})$ and in randomized time $O^*(\min\left\{\displaystyle{2^{(1.5\alpha-0.5)k}},2^{o(k)}\cdot\displaystyle{\max_{0\leq\beta\leq\beta^*}\left(\frac{c^{4-2\beta}}{\beta^{2\beta}\cdot(c-\beta)^{4-4\beta}}\right)^k}\right\})$, respectively, for any $c\geq 1$, where $\beta^*=\frac{4\alpha-3+4\epsilon}{1+4\epsilon}$.
\end{theorem}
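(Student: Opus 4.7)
The plan is to define \alg{Match} (respectively, \alg{MatchRand}) by the same ``run the faster of two procedures'' recipe that is used in the paper to build \alg{Pack}, \alg{SetPack} and \alg{SPRand}. Concretely, given an input $(E_1,E_2,E_3,{\cal S},k,\alpha)$ together with the user-chosen parameters $c$ and $\epsilon$, I would first evaluate (in constant time) the two running-time bounds given by Lemma \ref{lemma:matchPro1} for \alg{Match1} (resp.\ \alg{MatchRand1}) and by Lemma \ref{lemma:matchPro2} for \alg{Match2}; then invoke whichever of the two procedures has the smaller bound, and return its output.

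For correctness I would simply invoke the two lemmas. Each of \alg{Match1}, \alg{MatchRand1} and \alg{Match2} is, by its defining lemma, a solver for {\sc $3$D $(\alpha,k)$-Matching}: when the input admits a matching of size at least $k$, each returns a matching of size at least $\alpha k$, and otherwise it is allowed to return an arbitrary matching. Hence the output of \alg{Match} (respectively \alg{MatchRand}) is a valid solution no matter which branch was taken.

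The running-time argument is equally direct. Since only one procedure is executed, the total running time is, up to the polynomial overhead absorbed in the $O^*$ notation (including the constant-time comparison used to choose the branch), bounded by the minimum of the two bounds. Plugging in the expressions of Lemmas \ref{lemma:matchPro1} and \ref{lemma:matchPro2} yields exactly the claimed min of $2.5961^{(3\alpha-1)k}$ (respectively $2^{(1.5\alpha-0.5)k}$ in the randomized case) and $2^{o(k)}\cdot\max_{0\leq\beta\leq\beta^*}\bigl(c^{4-2\beta}/(\beta^{2\beta}(c-\beta)^{4-4\beta})\bigr)^k$.

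I do not anticipate any genuine obstacle: the content of the theorem is essentially a packaging of Lemmas \ref{lemma:matchPro1} and \ref{lemma:matchPro2}. The only subtle point is to check that the ``better of two'' decision itself can be made without hidden cost; this is immediate because, once $c$ and $\epsilon$ are fixed, the maximization over $\beta$ in Lemma \ref{lemma:matchPro2} is a single-variable optimization over a bounded interval, which can be evaluated to any fixed precision in constant time. Alternatively, one may sidestep the comparison entirely by running the two procedures in parallel and returning the first packing of size $\geq\alpha k$ (or an empty packing if neither succeeds), which yields the same asymptotic bound and avoids even mentioning the comparison step.
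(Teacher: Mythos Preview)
Your proposal is correct and matches the paper's own argument essentially verbatim: the paper simply states that \alg{Match} (resp.\ \alg{MatchRand}) calls whichever of \alg{Match1} (resp.\ \alg{MatchRand1}) and \alg{Match2} is more efficient, and then reads off the theorem from Lemmas~\ref{lemma:matchPro1} and~\ref{lemma:matchPro2}. Your added remarks about how to implement the branch selection (constant-time comparison or parallel execution) are more detailed than anything the paper provides, but entirely consistent with it.
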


Concrete figures for the running time of algorithms \alg{Match} and \alg{MatchRand} are given in Tables~\ref{tab:matchRunningTimes} and \ref{tab:matchRandRunningTimes} (see Appendix \ref{app:tables}), respectively.

\section{Tables}\label{app:tables}

\begin{table}[center,h!]
\centering
\begin{tabular}{|c|c|c|c|c|}
	\hline
	$\alpha$ & \alg{Pack}     & \alg{Pack1}    & \alg{Pack2}; $c$    & $O^*(6.75^{\alpha k+o(k)})$ \\\hline\hline
	0.99     & $O^*(6.338^k)$ & $O^*(6.338^k)$ & $---$      & $O^*(6.623^k)$              \\\hline
	0.98     & $O^*(6.034^k)$ & $O^*(6.034^k)$ & $---$      & $O^*(6.498^k)$              \\\hline
	0.97     & $O^*(5.774^k)$ & $O^*(5.774^k)$ & $---$      & $O^*(6.375^k)$              \\\hline
	0.96     & $O^*(5.544^k)$ & $O^*(5.544^k)$ & $---$      & $O^*(6.254^k)$              \\\hline
	0.95     & $O^*(5.337^k)$ & $O^*(5.337^k)$ & $---$      & $O^*(6.136^k)$              \\\hline
	0.94     & $O^*(5.147^k)$ & $O^*(5.147^k)$ & $---$      & $O^*(6.020^k)$              \\\hline
	0.93     & $O^*(4.972^k)$ & $O^*(4.972^k)$ & $---$      & $O^*(5.906^k)$              \\\hline	
	0.92     & $O^*(4.809^k)$ & $O^*(4.809^k)$ & $---$      & $O^*(5.794^k)$              \\\hline			
	0.91     & $O^*(4.658^k)$ & $O^*(4.658^k)$ & $---$      & $O^*(5.685^k)$              \\\hline		
	0.9      & $O^*(4.516^k)$ & $O^*(4.516^k)$ & $---$      & $O^*(5.577^k)$              \\\hline
	0.89     & $O^*(4.383^k)$ & $O^*(4.383^k)$ & $---$      & $O^*(5.472^k)$              \\\hline	
	0.88     & $O^*(4.257^k)$ & $O^*(4.257^k)$ & $---$      & $O^*(5.368^k)$              \\\hline	
	0.87     & $O^*(4.138^k)$ & $O^*(4.138^k)$ & $---$      & $O^*(5.267^k)$              \\\hline	
	0.86     & $O^*(4.025^k)$ & $O^*(4.025^k)$ & $---$      & $O^*(5.167^k)$              \\\hline		
	0.85     & $O^*(3.918^k)$ & $O^*(3.918^k)$ & $---$      & $O^*(5.069^k)$              \\\hline	
	0.84     & $O^*(3.816^k)$ & $O^*(3.816^k)$ & $---$      & $O^*(4.972^k)$              \\\hline
	0.83     & $O^*(3.719^k)$ & $O^*(3.719^k)$ & $---$      & $O^*(4.879^k)$              \\\hline	
	0.82     & $O^*(3.627^k)$ & $O^*(3.627^k)$ & $O^*(5.692^k)$; 1.8 & $O^*(4.787^k)$              \\\hline	
  0.81     & $O^*(3.538^k)$ & $O^*(3.538^k)$ & $O^*(4.880^k)$; 1.8 & $O^*(4.697^k)$              \\\hline	
  0.8      & $O^*(3.454^k)$ & $O^*(3.454^k)$ & $O^*(4.098^k)$; 1.9 & $O^*(4.608^k)$              \\\hline		
	0.79     & $O^*(3.361^k)$ & $O^*(3.373^k)$ & $O^*(3.361^k)$; 1.9 & $O^*(4.521^k)$              \\\hline
	0.78     & $O^*(2.684^k)$ & $O^*(3.295^k)$ & $O^*(2.684^k)$; 1.9 & $O^*(4.435^k)$              \\\hline		
	0.77     & $O^*(2.073^k)$ & $O^*(3.220^k)$ & $O^*(2.073^k)$; 1.9 & $O^*(4.351^k)$              \\\hline
	0.76     & $O^*(1.527^k)$ & $O^*(3.149^k)$ & $O^*(1.527^k)$; 2.0 & $O^*(4.269^k)$              \\\hline			
\end{tabular}\bigskip
\caption{The running times of \alg{Pack}, \alg{Pack1}, \alg{Pack2} and the best {\em exact} deterministic algorithm for {\sc $P_2$-Packing} \cite{mixing} (based on \cite{p2packdet}), for different accuracy parameters $\alpha$. Entries marked with dashes are too large to be relevant to the running time of \alg{Pack}.}
\label{tab:packRunningTimes}
\end{table}

\begin{table}[center]
\centering
\begin{tabular}{|c|c|c|c|c|}
	\hline
	$\alpha$ & \alg{SetPack}  & \alg{SetPack1} & \alg{Pack2}; $c$    & $O^*(8.097^{\alpha k})$ \\\hline\hline
	0.99     & $O^*(7.847^k)$ & $O^*(7.847^k)$ & $---$      & $O^*(7.930^k)$              \\\hline
	0.98     & $O^*(7.605^k)$ & $O^*(7.605^k)$ & $---$      & $O^*(7.766^k)$              \\\hline
	0.97     & $O^*(7.370^k)$ & $O^*(7.370^k)$ & $---$      & $O^*(7.605^k)$              \\\hline
	0.96     & $O^*(7.143^k)$ & $O^*(7.143^k)$ & $---$      & $O^*(7.448^k)$              \\\hline
	0.95     & $O^*(6.922^k)$ & $O^*(6.922^k)$ & $---$      & $O^*(7.294^k)$              \\\hline
	0.94     & $O^*(6.708^k)$ & $O^*(6.708^k)$ & $---$      & $O^*(7.174^k)$              \\\hline
	0.93     & $O^*(6.501^k)$ & $O^*(6.501^k)$ & $---$      & $O^*(6.995^k)$              \\\hline	
	0.92     & $O^*(6.300^k)$ & $O^*(6.300^k)$ & $---$      & $O^*(6.850^k)$              \\\hline			
	0.91     & $O^*(6.106^k)$ & $O^*(6.106^k)$ & $---$      & $O^*(6.708^k)$              \\\hline		
	0.9      & $O^*(5.917^k)$ & $O^*(5.917^k)$ & $---$      & $O^*(6.569^k)$              \\\hline
	0.89     & $O^*(5.734^k)$ & $O^*(5.734^k)$ & $---$      & $O^*(6.433^k)$              \\\hline	
	0.88     & $O^*(5.557^k)$ & $O^*(5.557^k)$ & $---$      & $O^*(6.230^k)$              \\\hline	
	0.87     & $O^*(5.386^k)$ & $O^*(5.386^k)$ & $---$      & $O^*(6.170^k)$              \\\hline	
	0.86     & $O^*(5.219^k)$ & $O^*(5.219^k)$ & $---$      & $O^*(6.042^k)$              \\\hline		
	0.85     & $O^*(5.058^k)$ & $O^*(5.058^k)$ & $---$      & $O^*(5.917^k)$              \\\hline	
	0.84     & $O^*(4.902^k)$ & $O^*(4.902^k)$ & $---$      & $O^*(5.795^k)$              \\\hline
	0.83     & $O^*(4.751^k)$ & $O^*(4.751^k)$ & $---$      & $O^*(5.675^k)$              \\\hline	
	0.82     & $O^*(4.604^k)$ & $O^*(4.604^k)$ & $O^*(5.692^k)$; 1.8 & $O^*(5.557^k)$              \\\hline	
  0.81     & $O^*(4.462^k)$ & $O^*(4.462^k)$ & $O^*(4.880^k)$; 1.8 & $O^*(5.442^k)$              \\\hline	
  0.8      & $O^*(4.098^k)$ & $O^*(4.324^k)$ & $O^*(4.098^k)$; 1.9 & $O^*(5.330^k)$              \\\hline		
	0.79     & $O^*(3.361^k)$ & $O^*(4.190^k)$ & $O^*(3.361^k)$; 1.9 & $O^*(5.219^k)$              \\\hline
	0.78     & $O^*(2.684^k)$ & $O^*(4.061^k)$ & $O^*(2.684^k)$; 1.9 & $O^*(5.111^k)$              \\\hline		
	0.77     & $O^*(2.073^k)$ & $O^*(3.936^k)$ & $O^*(2.073^k)$; 1.9 & $O^*(5.006^k)$              \\\hline
	0.76     & $O^*(1.527^k)$ & $O^*(3.814^k)$ & $O^*(1.527^k)$; 2.0 & $O^*(4.902^k)$              \\\hline			
\end{tabular}\bigskip
\caption{The running times of \alg{SetPack}, \alg{SetPack1}, \alg{Pack2} and the best {\em exact} deterministic algorithm for {\sc $3$-Set $k$-Packing} \cite{mixing}, for different accuracy parameters $\alpha$. Entries marked with dashes are too large to be relevant to the running time of \alg{SetPack}.}
\label{tab:setPackRunningTimes}
\end{table}

\begin{table}[center]
\centering
\begin{tabular}{|c|c|c|c|c|}
	\hline
	$\alpha$ & \alg{SPRand}  & \alg{SPRand1} & \alg{Pack2}; $c$    & $O^*(3.3432^{\alpha k})$ \\\hline\hline
	0.99     & $O^*(3.2833^k)$ & $O^*(3.2833^k)$ & $---$      & $O^*(3.3031^k)$              \\\hline
	0.98     & $O^*(3.2244^k)$ & $O^*(3.2244^k)$ & $---$      & $O^*(3.2635^k)$              \\\hline
	0.97     & $O^*(3.1665^k)$ & $O^*(3.1665^k)$ & $---$      & $O^*(3.2244^k)$              \\\hline
	0.96     & $O^*(3.1097^k)$ & $O^*(3.1097^k)$ & $---$      & $O^*(3.1857^k)$              \\\hline
	0.95     & $O^*(3.0539^k)$ & $O^*(3.0539^k)$ & $---$      & $O^*(3.1475^k)$              \\\hline
	0.94     & $O^*(2.9991^k)$ & $O^*(2.9991^k)$ & $---$      & $O^*(3.1097^k)$              \\\hline
	0.93     & $O^*(2.9453^k)$ & $O^*(2.9453^k)$ & $---$      & $O^*(3.0724^k)$              \\\hline	
	0.92     & $O^*(2.8925^k)$ & $O^*(2.8925^k)$ & $---$      & $O^*(3.0355^k)$              \\\hline			
	0.91     & $O^*(2.8406^k)$ & $O^*(2.8406^k)$ & $---$      & $O^*(2.9991^k)$              \\\hline		
	0.9      & $O^*(2.7896^k)$ & $O^*(2.7896^k)$ & $---$      & $O^*(2.9631^k)$              \\\hline
	0.89     & $O^*(2.7396^k)$ & $O^*(2.7396^k)$ & $---$      & $O^*(2.9276^k)$              \\\hline	
	0.88     & $O^*(2.6904^k)$ & $O^*(2.6904^k)$ & $---$      & $O^*(2.8925^k)$              \\\hline	
	0.87     & $O^*(2.6422^k)$ & $O^*(2.6422^k)$ & $---$      & $O^*(2.8678^k)$              \\\hline	
	0.86     & $O^*(2.5948^k)$ & $O^*(2.5948^k)$ & $---$      & $O^*(2.8235^k)$              \\\hline		
	0.85     & $O^*(2.5482^k)$ & $O^*(2.5482^k)$ & $---$      & $O^*(2.7896^k)$              \\\hline	
	0.84     & $O^*(2.5025^k)$ & $O^*(2.5025^k)$ & $---$      & $O^*(2.7562^k)$              \\\hline
	0.83     & $O^*(2.4576^k)$ & $O^*(2.4576^k)$ & $---$      & $O^*(2.7231^k)$              \\\hline	
	0.82     & $O^*(2.4135^k)$ & $O^*(2.4135^k)$ & $O^*(5.6914^k)$; 1.8 & $O^*(2.6904^k)$              \\\hline	
  0.81     & $O^*(2.3702^k)$ & $O^*(2.3702^k)$ & $O^*(4.8798^k)$; 1.8 & $O^*(2.6582^k)$              \\\hline	
  0.8      & $O^*(2.3277^k)$ & $O^*(2.3277^k)$ & $O^*(4.0972^k)$; 1.9 & $O^*(2.6263^k)$              \\\hline		
	0.79     & $O^*(2.2859^k)$ & $O^*(2.2859^k)$ & $O^*(3.3607^k)$; 1.9 & $O^*(2.5948^k)$              \\\hline
	0.78     & $O^*(2.2449^k)$ & $O^*(2.2449^k)$ & $O^*(2.6838^k)$; 1.9 & $O^*(2.5636^k)$              \\\hline		
	0.77     & $O^*(2.0728^k)$ & $O^*(2.2046^k)$ & $O^*(2.0728^k)$; 1.9 & $O^*(2.5329^k)$              \\\hline
	0.76     & $O^*(1.5261^k)$ & $O^*(2.1651^k)$ & $O^*(1.5261^k)$; 2.0 & $O^*(2.5025^k)$              \\\hline					
\end{tabular}\bigskip
\caption{The running times of \alg{SPRand}, \alg{SPRand1}, \alg{Pack2} and the best {\em exact} randomized algorithm for {\sc $3$-Set $k$-Packing} \cite{bjo10}, for different accuracy parameters $\alpha$. Entries marked with dashes are too large to be relevant to the running time of \alg{SPRand}.}
\label{tab:setPackRandRunningTimes}
\end{table}

\begin{table}[center]
\centering
\begin{tabular}{|c|c|c|c|c|}
	\hline
	$\alpha$ & \alg{Match}  & \alg{Match1} & \alg{Match2}; $c$    & $O^*(2.5961^{2\alpha k})$ \\\hline\hline
	0.99     & $O^*(6.5496^k)$ & $O^*(6.5496^k)$ & $---$      & $O^*(6.6124^k)$              \\\hline
	0.98     & $O^*(6.3648^k)$ & $O^*(6.3648^k)$ & $---$      & $O^*(6.4874^k)$              \\\hline
	0.97     & $O^*(6.1853^k)$ & $O^*(6.1853^k)$ & $---$      & $O^*(6.3648^k)$              \\\hline
	0.96     & $O^*(6.0107^k)$ & $O^*(6.0107^k)$ & $---$      & $O^*(6.2445^k)$              \\\hline
	0.95     & $O^*(5.8411^k)$ & $O^*(5.8411^k)$ & $---$      & $O^*(6.1265^k)$              \\\hline
	0.94     & $O^*(5.6763^k)$ & $O^*(5.6763^k)$ & $---$      & $O^*(6.0107^k)$              \\\hline
	0.93     & $O^*(5.5162^k)$ & $O^*(5.5162^k)$ & $---$      & $O^*(5.8971^k)$              \\\hline	
	0.92     & $O^*(5.3606^k)$ & $O^*(5.3606^k)$ & $---$      & $O^*(5.7857^k)$              \\\hline			
	0.91     & $O^*(5.2093^k)$ & $O^*(5.2093^k)$ & $---$      & $O^*(5.6763^k)$              \\\hline		
	0.9      & $O^*(5.0623^k)$ & $O^*(5.0623^k)$ & $---$      & $O^*(5.5691^k)$              \\\hline
	0.89     & $O^*(4.9195^k)$ & $O^*(4.9195^k)$ & $---$      & $O^*(5.4638^k)$              \\\hline	
	0.88     & $O^*(4.7807^k)$ & $O^*(4.7807^k)$ & $---$      & $O^*(5.3606^k)$              \\\hline	
	0.87     & $O^*(4.6458^k)$ & $O^*(4.6458^k)$ & $---$      & $O^*(5.2592^k)$              \\\hline	
	0.86     & $O^*(4.5147^k)$ & $O^*(4.5147^k)$ & $---$      & $O^*(5.1598^k)$              \\\hline		
	0.85     & $O^*(4.3874^k)$ & $O^*(4.3874^k)$ & $---$      & $O^*(5.0623^k)$              \\\hline	
	0.84     & $O^*(4.2636^k)$ & $O^*(4.2636^k)$ & $---$      & $O^*(4.9667^k)$              \\\hline
	0.83     & $O^*(4.1433^k)$ & $O^*(4.1433^k)$ & $---$      & $O^*(4.8728^k)$              \\\hline	
	0.82     & $O^*(4.0264^k)$ & $O^*(4.0264^k)$ & $O^*(4.6105^k)$; 1.7 & $O^*(4.7807^k)$              \\\hline	
  0.81     & $O^*(3.9128^k)$ & $O^*(3.9128^k)$ & $O^*(4.0641^k)$; 1.8 & $O^*(4.6904^k)$              \\\hline	
  0.8      & $O^*(3.5107^k)$ & $O^*(3.8024^k)$ & $O^*(3.5107^k)$; 1.8 & $O^*(4.6017^k)$              \\\hline		
	0.79     & $O^*(2.9663^k)$ & $O^*(3.6951^k)$ & $O^*(2.9663^k)$; 1.8 & $O^*(4.5147^k)$              \\\hline
	0.78     & $O^*(2.4414^k)$ & $O^*(3.5908^k)$ & $O^*(2.4414^k)$; 1.9 & $O^*(4.4294^k)$              \\\hline		
	0.77     & $O^*(1.9448^k)$ & $O^*(3.4895^k)$ & $O^*(1.9448^k)$; 1.9 & $O^*(4.3457^k)$              \\\hline
	0.76     & $O^*(1.4778^k)$ & $O^*(3.3911^k)$ & $O^*(1.4778^k)$; 2.0 & $O^*(4.2636^k)$              \\\hline				
\end{tabular}\bigskip
\caption{The running times of \alg{Match}, \alg{Match1}, \alg{Match2} and the best {\em exact} deterministic algorithm for {\sc $3$D $k$-Matching} \cite{mixing}, for different accuracy parameters $\alpha$. Entries marked with dashes are too large to be relevant to the running time of \alg{Match}.}
\label{tab:matchRunningTimes}
\end{table}

\begin{table}[center]
\centering
\begin{tabular}{|c|c|c|c|c|}
	\hline
	$\alpha$ & \alg{MatchRand}  & \alg{MatchRand1} & \alg{Match2}; $c$    & $O^*(2^{\alpha k})$ \\\hline\hline
	0.99     & $O^*(1.9794^k)$ & $O^*(1.9794^k)$ & $---$      & $O^*(1.9862^k)$              \\\hline
	0.98     & $O^*(1.9589^k)$ & $O^*(1.9589^k)$ & $---$      & $O^*(1.9725^k)$              \\\hline
	0.97     & $O^*(1.9386^k)$ & $O^*(1.9386^k)$ & $---$      & $O^*(1.9589^k)$              \\\hline
	0.96     & $O^*(1.9186^k)$ & $O^*(1.9186^k)$ & $---$      & $O^*(1.9454^k)$              \\\hline
	0.95     & $O^*(1.8987^k)$ & $O^*(1.8987^k)$ & $---$      & $O^*(1.9319^k)$              \\\hline
	0.94     & $O^*(1.8791^k)$ & $O^*(1.8791^k)$ & $---$      & $O^*(1.9186^k)$              \\\hline
	0.93     & $O^*(1.8597^k)$ & $O^*(1.8597^k)$ & $---$      & $O^*(1.9053^k)$              \\\hline	
	0.92     & $O^*(1.8404^k)$ & $O^*(1.8404^k)$ & $---$      & $O^*(1.8922^k)$              \\\hline			
	0.91     & $O^*(1.8214^k)$ & $O^*(1.8214^k)$ & $---$      & $O^*(1.8791^k)$              \\\hline		
	0.9      & $O^*(1.8026^k)$ & $O^*(1.8026^k)$ & $---$      & $O^*(1.8661^k)$              \\\hline
	0.89     & $O^*(1.7839^k)$ & $O^*(1.7839^k)$ & $---$      & $O^*(1.8532^k)$              \\\hline	
	0.88     & $O^*(1.7655^k)$ & $O^*(1.7655^k)$ & $---$      & $O^*(1.8404^k)$              \\\hline	
	0.87     & $O^*(1.7472^k)$ & $O^*(1.7472^k)$ & $---$      & $O^*(1.8277^k)$              \\\hline	
	0.86     & $O^*(1.7291^k)$ & $O^*(1.7291^k)$ & $---$      & $O^*(1.8151^k)$              \\\hline		
	0.85     & $O^*(1.7112^k)$ & $O^*(1.7112^k)$ & $---$      & $O^*(1.8026^k)$              \\\hline	
	0.84     & $O^*(1.6935^k)$ & $O^*(1.6935^k)$ & $---$      & $O^*(1.7901^k)$              \\\hline
	0.83     & $O^*(1.6760^k)$ & $O^*(1.6760^k)$ & $---$      & $O^*(1.7777^k)$              \\\hline	
	0.82     & $O^*(1.6587^k)$ & $O^*(1.6587^k)$ & $O^*(4.6105^k)$; 1.7 & $O^*(1.7655^k)$              \\\hline	
  0.81     & $O^*(1.6415^k)$ & $O^*(1.6415^k)$ & $O^*(4.0641^k)$; 1.8 & $O^*(1.7533^k)$              \\\hline	
  0.8      & $O^*(1.6246^k)$ & $O^*(1.6246^k)$ & $O^*(3.5107^k)$; 1.8 & $O^*(1.7412^k)$              \\\hline		
	0.79     & $O^*(1.6078^k)$ & $O^*(1.6078^k)$ & $O^*(2.9663^k)$; 1.8 & $O^*(1.7291^k)$              \\\hline
	0.78     & $O^*(1.5911^k)$ & $O^*(1.5911^k)$ & $O^*(2.4414^k)$; 1.9 & $O^*(1.7172^k)$              \\\hline		
	0.77     & $O^*(1.5747^k)$ & $O^*(1.5747^k)$ & $O^*(1.9448^k)$; 1.9 & $O^*(1.7053^k)$              \\\hline
	0.76     & $O^*(1.4778^k)$ & $O^*(1.5584^k)$ & $O^*(1.4778^k)$; 2.0 & $O^*(1.6935^k)$              \\\hline					
\end{tabular}\bigskip
\caption{The running times of \alg{MatchRand}, \alg{MatchRand1}, \alg{Match2} and the best {\em exact} randomized algorithm for {\sc $3$D $k$-Matching} \cite{bjo10}, for different accuracy parameters $\alpha$. Entries marked with dashes are too large to be relevant to the running time of \alg{MatchRand}.}
\label{tab:matchRandRunningTimes}
\end{table}

\end{document}